\newcommand{\cM}{{\mathcal M}}
\newcommand{\bM}{{\mathbb{M}}}
\newtheorem{theorem}{Theorem}
\newtheorem{lemma}{Lemma}
\newtheorem{informal}{Informal Theorem}
\newtheorem{property}{Property}
\begin{document}

\title{Revenue  Maximization  with  Imprecise  Distribution}  



\author{Yingkai Li
  \thanks{Department of Electrical Engineering and Computer Science, Northwestern University. Email: \texttt{yingkai.li@u.northwestern.edu}.}
  \and
  Pinyan Lu
  \thanks{Institute for Theoretical Computer Science, Shanghai University of Finance and Economics.
  Email: \texttt{lu.pinyan@mail.shufe.edu.cn}. 
  This author is supported by Innovation Program of Shanghai Municipal Education Commission.}
  \and 
  Haoran Ye
  \thanks{Zhiyuan College, Shanghai Jiao Tong University.
  Email: \texttt{yepro12@sjtu.edu.cn}.} }


\maketitle

\begin{abstract}  
We study the revenue maximization problem with an imprecisely estimated distribution of a single buyer or several independent and identically distributed buyers given that this estimation is not far away from the true distribution. We use the earth mover's distance to capture the estimation error between those two distributions in terms of both values and their probabilities, i.e., the error in value space given a quantile, and the error in quantile space given a value. 
We give explicit characterization of the optimal mechanisms for the single buyer setting. For the multi-buyer case, we provide an algorithm that finds an approximately optimal mechanism (FPTAS) among the family of second price mechanisms with a fixed reserve.
\end{abstract}


\section{Introduction}
One of the most important topics in mechanism design is the revenue maximization problem.
Optimal mechanisms have been well studied in both 
single-item setting, following the seminar work of \citet{myerson1981optimal}, 
and multi-item setting,
see \citet{cai2012algorithmic,cai2012optimal}. 
There is also a huge literature on approximately optimal simple mechanisms,
see \citet{roughgarden2018approximately} for a detailed discussion. 
All those papers assume that the seller knows the exact distribution of the buyers.
However, according to \citet{harsanyi1967games}, this assumption is hard to realize in practice.
To overcome this problem, lots of studies have focused on revenue maximization with weaker assumptions.
In \citet{cole2014sample},
they assume that the seller only has access to the samples of the buyers' distributions,
and the goal is to maximize the revenue using the distribution reconstructed from the samples.
The model in \citet{chen2017query} assumes that the seller has oracle access to the value and quantile information about the true distribution.
Similar to the goal of sample complexity, the goal here is to reconstruct the distribution with limited queries. 
The prior-independent mechanism design model considered in \citet{devanur2011prior} 
adopts an extreme assumption that the seller has no information about the distribution. 
They assume that the buyers' valuations are independent and identically distributed.
All those papers focus on the nature of the imprecision on the prior distribution, which will also be the main focus of our paper. 
\subsection{Our Problem and Results}
The seller is given an imprecisely estimated distribution of a single buyer or several i.i.d buyers, 
and the assumption is that it is
within $\epsilon$ earth mover's distance to the true distribution. The seller designs a mechanism based on the estimated distribution,
while its performance is evaluated with respect to the worst case distribution within $\epsilon$ earth mover's distance to the estimated distribution. We want to design the optimal mechanism in this worst case performance metric.  This will give a lower bound for any true distribution in that range and we call such mechanism \emph{the most robust mechanism}.

In section \ref{sec_single}, we solve the max-min problem when there is a single buyer.
The idea is to write this problem using a linear program, and transform it into its dual form.
By analyzing the properties of the dual, we can successfully characterize the optimal robust mechanism.
Here, we state the results for continuous distributions informally.
\begin{informal}
When there is a single buyer, the allocation $x$ for the optimal mechanism satisfies that $\exists a, b$: 
\begin{eqnarray*} x(v) =\begin{cases} 0& \forall v < a;\\
 (\ln \frac{v}{a}) / (\ln \frac{b}{a})& \forall a \leq v \leq b;\\
1& \forall b < v.
\end{cases}
\end{eqnarray*}
\end{informal}

The formal result are stated in Theorem \ref{thm_1buy} for discrete distributions. For continuous distribution, we can discretize the support and come up with such mechanism that is arbitrarily close to the optimal. 
 As we show in the above theorem, the optimal mechanism is a randomized mechanism.
Moreover, using this characterization, we can bound the gap between the optimal randomized mechanism
and the optimal deterministic mechanism for the max-min goal,
which is shown in Theorem \ref{thm_d_up} and \ref{thm_d_low}.
In section \ref{sec_multi},
we first characterize the worst case distribution for the second price mechanism.
\begin{informal}
When there are multiple buyers, the worst case distribution 
for the second price mechanism satisfies that, 
there exists $k,l$ such that $k < l$, 
the density between value $k$ and $l$ is $0$, 
and the distribution outside $[k, l]$ remains the same 
comparing to the distribution known by the seller.
\end{informal}
This result is formally stated in Theorem \ref{thm_sec}.
Using this characterization as a bridge,
we are able to further characterize the distribution with minimum revenue for second price mechanism with a fixed reserve.
Moreover, in Theorem \ref{thm_reserve}, for Lipschitz continuous distributions with bounded support,
we design a FPTAS algorithm for finding the \emph{optimal robust reserve} for the second price auction
when there are multiple buyers. 
\citet{ostrovsky2011reserve} show that 
by decreasing the theoretical optimal reserve price 
by a constant factor, 
the revenue increases in practice. 
They do not provide theoretical intuition for their result. 
Our result indicates that one possible reason is that 
the underlying true distribution is not identical to the 
distribution known by the seller, 
and the original theoretical optimal reserve price 
is not robust with respect to the imprecision of the distribution. 

\subsection{Previous Work on Robust Mechanism Design}
There are a number of works studying the robustness of mechanisms at different levels of prior imprecision
in both econ literature, 
e.g., \citet{Dirk2008Robust,Carroll2016Robustly,Carroll2017Robustness}, 
and computer science literature, 
e.g., \citet{cai2017learning, gravin2018separation}. 
To quantify the imprecision, some may define a distance metric for distributions, and then an imprecise distribution can be characterized as any distribution within a distance of $\epsilon$ from the true one. 
With those imprecise distributions, we can design a max-min game between the seller and the adversarial nature: 
the seller proposes a mechanism, and then the adversary chooses a distribution with minimum expected revenue generated by that mechanism. 
The performance of the optimal mechanism is studied under such circumstance.
For one dimensional density functions $f,f'$ and their corresponding cumulative distributions $F,F'$, we present the informal definition of some metrics used in previous works:
\begin{itemize}
\item{Total Variation distance (TV distance)}: $\sup_{S\subset[0,+\infty)}|f(S)-f'(S)|$
\item{Kolmogorov distance}: $\sup_{x\geq0}|F(x)-F'(x)|$
\item{Prokhorov distance}: $\inf\{\epsilon|\forall S\subset[0,\infty), f(S)\leq f'(S^\epsilon)+\epsilon\}$,
where $S^\epsilon=\{x\in[0,+\infty)|\min_{y\in S}|x-y|\leq\epsilon\}$.
\item{Earth Mover's distance (First order Wasserstein distance)}:
$\inf_{\pi(x,y)\in\Pi}\int_0^{+\infty}\int_0^{+\infty}\pi(x,y)|x-y|dxdy$
where $\Pi$ is the set of all the two dimensional distribution whose marginals are respectively $f$ and $f'$.
\end{itemize}
With these distance metrics, some researches have been conducted on robust mechanisms design. 
\citet{cai2017learning} studies the revenue guarantee of simple and robust mechanism in multi-item, multi-buyer settings. 
By sampling from the distribution $O(\frac{log(\frac{1}{\delta})}{\epsilon^2})$ times, 
they are able to learn the marginal distribution for each item that is within $\epsilon$ TV-distance, with probability at least $1-\delta$. 
In such setting, they also study the upper bound of approximation ratio for simple mechanisms, such as rationed sequential posted price mechanisms or anonymous sequential posted price mechanisms with entry fees.
The work of \citet{Dirk2008Robust} focuses on the Prokhorov distance. 
When there is a single buyer, a single item and a potential $\epsilon$ Prokhorov distance between the true prior and the known distribution, 
the paper examines the properties and the equilibrium of the maxmin game via an implicit function, 
and they characterize the seller's optimal mechanism as posting a deterministic price. 

Most of the distance metrics are chosen for the convenience of computation, while disadvantages of such choices do exist.
\begin{itemize}
\item\textbf{Sensitivity to Distribution:}
Some distance measures only focus on the errors in probability space, not the errors in value space.
For instance, consider two distributions which are $1$ and $1-\epsilon$ respectively with probability $1$, where $\epsilon>0$ is a small constant.
Those two distributions should be considered to be close to each other, with a small error in value space.
However, the TV distance and Kolmogorov distance between these two distributions both achieve their maximum $1$, indicating that a tiny deviation of distributions can bring about a giant variation distances. 
\item\textbf{Sensitivity to Structure:}
Some previous results for certain metrics benefit from the property that there is a distribution which is statistically dominated by all the distributions within certain distance. For Prokhorov distance and Kolmogorov distance, such distributions is to move the the probability of measure $\epsilon$ from the highest value to the lowest. 
For single item settings, the max-min mechanism becomes the optimal mechanism on the dominated distribution due to revenue monotonicity. 
This property does not hold in general, such as the robustness for the earth mover's distance, 
and hence their results cannot be generalized for more general settings. 
\end{itemize}
Therefore, proper distance measures should take the errors in value space into consideration, i.e. either there is a small error in estimating the value or the probability of a large error in estimating the value is small. 
This property is well characterized by the earth mover's distance. 
Moreover, there does not exist a single distribution that is stochastically dominated by all other distributions within $\epsilon$ earth mover's distance to the known distribution. 
Thus, we cannot exploit the revenue monotonicity of the single item auction. 
As a result, our mechanism is much more complex than the previous results, 
in the sense that our mechanism is randomized. 

\section{Preliminaries}
In this paper, we consider the problem that the seller tries to sell a single indivisible item to the buyers whose values are independently and identically distributed.
Let $M = \{1,\dots,m\}$ denote the set of buyers.
For the sake of simplicity, we start with discrete distributions,
and we will show how to generalize our results to continuous distributions.
For any buyer $j \in M$, his value $v_j$ takes from a discrete value set
$V = \{v_0, \dots, v_n\}$, where $v_0 < v_1<\cdots < v_n$.
We assume that the seller knows the set $V$ and a discrete distribution $f = (f_i)_{i\in \{0,...,n\}}$, where $f_i$ is the probability of value $v_i$.
When we consider the continuous distribution,
we will use the notation and let $f(v)$ denote the probability density at value $v$.
We denote $F$ as the corresponding cumulative probability function.
Note that this is not the true underlying distribution for the buyers.

We assume that the error of the distribution is small.
We characterize this error using earth mover's distance.
To change from $f$ to $f'$, we need to move a possibility measure at least $ \Big| \sum_{j < i}(f_j - f'_j) \Big|$ from $v_i$ to $v_{i-1}$ or from $v_{i-1}$ to $v_{i}$. 
Therefore, the earth mover's distance between discrete distributions $f$ and $f'$ is equivalently defined as
\begin{equation}\label{eq_emd}
EMD(f, f') = \sum_{i\in[n]} (v_i - v_{i-1}) \Big| \sum_{j < i}(f_j - f'_j) \Big|
\end{equation}
We assume that the true distribution for the buyers is within $\epsilon$ earth mover's distance of the known distribution $f$.
That is, the seller knows $f$ and knows that the true distribution $f' \in EMD(f, \epsilon)$,
where
\begin{equation}\label{eq_emd_eps}
EMD(f, \epsilon) = \{f' | EMD(f, f') \leq \epsilon\}
\end{equation}
Since both probability density function $f$ and cumulative probability function $F$ uniquely decides a distribution,
we also use notations $EMD(F, F')$, $EMD(F, \epsilon)$ for the same meaning.

In this paper, we use $\cM = (x, p)$ to denote a mechanism,
where $x = (x^j_i)_{i\in \{0,...,n\}, j\in M}$ is the allocation rule
and $p = (p^j_i)_{i\in \{0,...,n\}, j\in M}$ is the payment rule.
Here $x^j_i$ is the probability that buyer $j$ gets the item when he bids $v_i$,
and $p^j_i$ is the price that buyer $j$ pays when he bids $v_i$.
When there is only one buyer, we will omit the superscription for the allocation and the price without ambiguity.

The goal of the seller is to find a individual rational (IR) incentive compatible (IC) mechanism $\cM$
to maximize the minimum revenue among all those possible distributions.
That is, denoting $Rev(\cM, f)$ as the revenue of a IR-IC mechanism $\cM$ with buyer distribution $f$,
and denoting $\bM$ as the set of mechanisms we consider for the setting,
the goal of the seller is to find a mechanism $\cM^*$
such that
\begin{equation}\label{opt_1}
\cM^* = \arg\max_{\cM \in \bM}\min_{f'\in EMD(f, \epsilon)} Rev(\cM, f')
\end{equation}

In the rest of this paper, we will show how to achieve this goal when there is only one buyer
and the mechanism set $\bM$ is the set of all IR-IC mechanisms,
or when there are multiple buyers
and the mechanism set $\bM$ is the set of second price mechanisms with fixed reserves.
Moreover, for the single buyer case, we start with the assumption that the distribution $f$ known by the seller is regular,
i.e. the virtual value $\phi_i = v_i-\frac{1-F_i}{f_i}$ is non-decreasing \cite{myerson1981optimal}.
Note that this does not imply that the true distribution is regular.
In fact, this assumption can be removed and we formally discuss it in section \ref{sec_irregular}. 

\section{Single Buyer Case}\label{sec_single}
When there is a single buyer, the form of the mechanism is simply posting a (randomized) menu to the buyer,
and let the buyer choose his favorite entry.
In this case, we prove that the mechanism satisfying the max-min goal has a very simple form.
Formally, we have the following theorem.

\begin{theorem}\label{thm_1buy}
When there is a single buyer and $\bM$ is the set of all IR-IC mechanisms,
for any discrete distribution $f$ with support $\{v_0, v_1,\dots,v_n\}$,
there exists a polynomial time algorithm that finds mechanism
$\cM^* = \arg\max_{\cM \in \bM}\min_{f'\in EMD(f, \epsilon)} Rev(\cM, f')$.
Moreover, mechanism $\cM^* = (x,p)$ takes one of the following forms.
\vspace{-3pt}
\begin{enumerate}
\item For any $i\in \{0,\dots,n\}, x_i = 1, p_i = v_0$.
\item There exist $0 \leq a \leq b \leq n$ such that
$\lambda = 1 / (\sum_{j=a}^{b} \frac{v_j - v_{j-1}}{v_j})$, and
\begin{equation*}
x_i=\begin{cases}
0& 0 \leq i < a\\
\sum_{j=a}^i\frac{\lambda(v_j - v_{j-1})}{v_j} & a \leq i \leq b\\
1& b < i \leq n.
\end{cases}
\end{equation*}
\end{enumerate}
\end{theorem}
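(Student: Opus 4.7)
The plan is to recast the max--min problem as a single linear program, by Myerson-style payment pinning and by dualizing the inner minimization, and then to read off the structural form of the optimum from complementary slackness. First, I would use the one-dimensional IR/IC characterization to reduce the mechanism to a nondecreasing allocation $0 \leq x_0 \leq x_1 \leq \cdots \leq x_n \leq 1$, with minimum-utility-extraction payments $p_i = v_i x_i - \sum_{k=1}^i (v_k - v_{k-1})\, x_{k-1}$, so in particular $p_i - p_{i-1} = v_i\,(x_i - x_{i-1})$. This makes both the revenue $\sum_i p_i f'_i$ and the payments linear in $x$.

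Next, for each fixed $x$, I would linearize the inner minimization over $f' \in EMD(f,\epsilon)$ by introducing nonnegative flow variables $g_i^+, g_i^-$ with $g_i^+ - g_i^- = \sum_{j<i}(f_j - f'_j)$ (thus encoding the absolute values in (\ref{eq_emd})), the single inequality $\sum_i (v_i - v_{i-1})(g_i^+ + g_i^-) \leq \epsilon$, the normalization $\sum_i f'_i = 1$, and $f'_i \geq 0$. Taking the dual gives one free variable $\alpha$ for normalization, one free variable $\beta_i$ per flow-balance equation, and one nonnegative variable $\gamma$ for the EMD budget, with dual constraints $\alpha + \sum_{j>i} \beta_j \leq p_i$ and $|\beta_i| \leq \gamma(v_i - v_{i-1})$, and dual objective $\alpha + \sum_{i=1}^n \beta_i F_{i-1} - \gamma\epsilon$. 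Since the primal is bounded and feasible, strong duality holds, and plugging the dual back into the outer max yields a single LP in $(x, \alpha, \beta, \gamma)$.

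The structural core is then complementary slackness. A worst-case $f'$ can only redistribute mass across indices where the EMD-rate constraint is tight, i.e., where $|\beta_i| = \gamma(v_i - v_{i-1})$; since mass only moves downward, these tight indices form a contiguous active interval $\{a, a{+}1, \ldots, b\}$. On this interval, tightness of $\alpha + \sum_{j>i}\beta_j \leq p_i$ (forced by $f'_i > 0$), combined with $p_i - p_{i-1} = v_i(x_i - x_{i-1})$, yields the recurrence $x_i - x_{i-1} = \gamma(v_i - v_{i-1})/v_i$, which is exactly the log-harmonic increment in Theorem~\ref{thm_1buy}. Below $a$ the allocation must be $0$ (otherwise the adversary would push more mass down), and at $b$ it must hit $1$, pinning $\lambda = \gamma = 1/\sum_{j=a}^b (v_j - v_{j-1})/v_j$. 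The degenerate case of an empty active interval collapses to mechanism~(1), which extracts $v_0$ deterministically regardless of $f'$. The polynomial-time algorithm then enumerates the $O(n^2)$ pairs $(a,b)$ together with the trivial mechanism and picks the best.

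The main obstacle is to rigorously justify that the active set of the EMD constraint is a single contiguous interval with a uniform tight slope $\gamma$, rather than several disjoint tight intervals with possibly different slopes or a more exotic profile. I expect this to require a careful exchange argument on top of complementary slackness: if two tight segments were separated by a slack gap, a local perturbation of $x$ that merges or extends them would strictly raise the worst-case revenue while preserving monotonicity, contradicting optimality. Once this ``one interval, one slope'' structure is secured, summing the recurrence and applying the boundary conditions $x_{a-1}=0$ and $x_b=1$ closes the argument and produces the closed-form expression for $x$ and $\lambda$.
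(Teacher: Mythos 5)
Your overall framework---pin the payments via the discrete Myerson identity $p_i-p_{i-1}=v_i(x_i-x_{i-1})$, dualize the adversary's LP over $f'\in EMD(f,\epsilon)$, and merge the dual with the outer maximization into a single LP---is sound and runs parallel to the paper's proof, with a somewhat different parameterization: you keep $f'$ explicit with signed flow variables $g_i^{\pm}$ and a normalization constraint, whereas the paper first invokes revenue monotonicity to restrict to downward transfers $t_i$, dualizes to get $(\beta,\lambda)$, and then needs an extra reduction (Lemma~\ref{zinp}, via Properties~\ref{lm1} and~\ref{lm2}) to show the substituted vector $z=p-\beta$ can be taken to be a genuine payment rule. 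Your payment-pinning sidesteps that domination lemma, and your local complementary-slackness computation ($\beta_i=p_{i-1}-p_i$ on consecutive tight $f'$-constraints, combined with $|\beta_i|=\gamma(v_i-v_{i-1})$, giving $x_i-x_{i-1}=\gamma(v_i-v_{i-1})/v_i$) is correct as far as it goes.

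However, there is a genuine gap exactly where you flag it, and the fix you anticipate cannot work as described. The entire content of the theorem is the global structure: a \emph{single} contiguous interval $[a,b]$ with the uniform slope $\gamma$, allocation $0$ below $a$ and $1$ above $b$, $\lambda=1/\sum_{j=a}^{b}(v_j-v_{j-1})/v_j$, plus the degenerate case $x\equiv 1$, $p\equiv v_0$. Complementary slackness alone does not give this, and your proposed distribution-free exchange argument (``merging two tight segments raises the worst-case revenue'') is false in general: the paper's standing assumption here is that $f$ is regular, and for irregular $f$ the optimum genuinely consists of several disjoint intervals (Theorem~\ref{thm_1buy_irregular}), so any argument ruling out multiple segments must use regularity. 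In the paper this is where the virtual-value machinery enters: the objective of problem~(\ref{opt_66}) is rewritten as $\sum_i y_i\phi_i f_i-\lambda\epsilon$, the sign-change index $k$ of the (monotone) virtual values forces $y_i$ to be as small as possible below $k$ and as large as possible above $k$ subject to the slope constraints, and a vertex/boundary analysis of the resulting LP~(\ref{opt_8}) in $(\lambda,y_a)$ pins the boundary values $y_{a-1}=0$, $y_b=1$, yields the formula for $\lambda$, and produces case~1 as the $\lambda=0$ vertex. Your sketch also asserts ``below $a$ the allocation must be $0$ (otherwise the adversary would push more mass down)'' and ``the empty interval collapses to mechanism~(1)'' without derivation; both facts come out of this virtual-value/extreme-point analysis rather than from adversary behavior. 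To complete your route you would need to import that step (or an equivalent use of regularity), at which point the argument essentially coincides with the paper's.
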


In order to solve problem (\ref{opt_1}), first we transform the problem into a simpler form.
According to \cite{zhiyi2016side}, the revenue of the seller is monotone with respect to statistical dominance.
Therefore, we only need to consider the distribution where the probability mass is moved from a high value to low value.
Let the probability transferred from value $v_i$ to $v_{i-1}$ be $t_i$. Then the new distribution $f'$ satisfies that $f'_i =f_i + t_{i+1}-t_i $ and the earth mover distance between $f$ and $f'$ can be written as 
\[EMD(f, f')=\sum_{i=1}^n t_i (v_i - v_{i-1}). \]
Let the allocation and payment rule of a mechanism $\cM$ be $(x, p)$. We can write (\ref{opt_1}) as the following explicit mathematical programming: 

\begin{equation}\label{opt_2}
\max_{x,p}\min_{t} \sum_{i=0}^np_if_i - \sum_{i=1}^nt_i(p_i-p_{i-1})
\end{equation}
\begin{eqnarray*}
v_ix_i-p_i\geq v_i x_j-p_j, && \forall 0 \leq i,j \leq n\\
v_ix_i-p_i\geq 0, && \forall 0 \leq i \leq n\\
p_i\geq0, 0 \leq x_i \leq 1, && \forall 0 \leq i \leq n\\
t_i+f_{i-1}\geq t_{i-1}, && \forall 2 \leq i \leq n\\
t_i\geq0, && \forall 1 \leq i \leq n\\
f_n \geq t_n, 
\sum_{i=1}^n t_i (v_i - v_{i-1}) \leq \epsilon. &&
\end{eqnarray*}

\noindent
Fix a feasible pair of variables $x, p$ and focus on the
minimization part of the optimization problem~(\ref{opt_2}).
Since the term $\sum_{i=1}^n p_i f_i$ is a constant with respect to $t$,
the minimization problem can be written equivalently into the following form.
\begin{equation}\label{opt_3}
\min_{t}-\sum_{i=1}^nt_i(p_i-p_{i-1})
\end{equation}
\vspace{-10pt}
\begin{eqnarray*}
t_{i+1} + f_{i} \geq t_{i} 
, && \forall 1 \leq i \leq n-1\\
t_i\geq 0, && \forall 1 \leq i \leq n\\
f_n \geq t_n, 
\sum_{i=1}^n t_i (v_i - v_{i-1}) \leq \epsilon
. &&
\end{eqnarray*}
We can write its dual form as follows:
\begin{equation}\label{opt_4}
\max_{\beta,\lambda}-\lambda\epsilon - \sum_{i=1}^{n}\beta_i f_{i}
\end{equation}
\vspace{-10pt}
\begin{eqnarray*}
\beta_i - \beta_{i-1} + \lambda (v_i - v_{i-1})
\geq p_i-p_{i-1}, && \forall 2 \leq i \leq n\\
\beta_1 + \lambda (v_i - v_{i-1})
\geq p_1 - p_0, &&\\
\lambda\geq0, 
\beta_i\geq0, && \forall 1 \leq i \leq n\\
\end{eqnarray*}
We define $z = \{z_i\}_{i\in \{0,\dots,n\}}$ such that $z_0 = p_0$ and $z_i = p_i-\beta_i, \forall 1\leq i \leq n$.
Also, we say a vector $v$ dominates $u$ if and only if for any $i$, $v_i\geq u_i$.
According to Myerson's lemma, in a DSIC mechanism, the allocation is non-decreasing.
We denote $P$ as the set of all possible payment rules of the IR-IC mechanisms,
and we denote $Z$ as the set of all possible $z$'s that are dominated by some $p\in P$.
Obviously, $P\subset Z$ for every vector dominates itself.
By strong duality, the optimal of the primal and the dual are equal.
Hence, by substituting the primal problem (\ref{opt_3}) with the dual problem (\ref{opt_4}),
we can rewrite the original optimization problem (\ref{opt_2}) equivalently as follows.
\begin{equation}\label{opt_5}
\max_{\lambda,z(\lambda, \beta, p)}\sum_{i=1}^{n} z_i f_i-\lambda\epsilon
\end{equation}
\vspace{-10pt}
\begin{eqnarray*}
z_i-z_{i-1}\leq \lambda (v_i - v_{i-1}), && \forall 1\leq i \leq n\\
z\in Z, 
\lambda\geq0. &&
\end{eqnarray*}

To solve problem (\ref{opt_5}), we are going to reduce it into another problem:
\begin{equation}\label{opt_66}
\max_{\lambda,z}\sum_{i=1}^{n} z_i f_i-\lambda\epsilon
\end{equation}
\vspace{-10pt}
\begin{eqnarray*}
z_i-z_{i-1}\leq \lambda (v_i - v_{i-1}), && \forall 1\leq i \leq n\\
z\in P,
\lambda\geq0. &&
\end{eqnarray*}

Intuitively, the solution to (\ref{opt_66}) is the payment rule of an IR-IC mechanism of which expected revenue minus $\lambda\epsilon$ achieves the maximum and adjacent payments are bounded by a multiple of $\lambda$. To show that the the solutions to problem (\ref{opt_5}) and (\ref{opt_66}) are the same, we need the following lemma:
\begin{lemma}\label{zinp}
If $(\lambda, z)$ is the solution of the optimization problem (\ref{opt_5}), $z\in P$.
\end{lemma}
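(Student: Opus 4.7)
The plan is to show that any optimal $(\lambda^*, z^*)$ of problem (\ref{opt_5}) can be replaced by $(\lambda^*, \tilde z)$ with $\tilde z \in P$ and the same objective value. Because $z^* \in Z$, the substitution $z_0 = p_0,\ z_i = p_i - \beta_i$ with $\beta_i \geq 0$ guarantees some $p \in P$ satisfying $p_0 = z^*_0$ and $p_i \geq z^*_i$ for $i \geq 1$. I will construct $\tilde z$ by ``clipping'' $p$ so that it respects the $\lambda^*$-constraints, and then verify that the clipped vector still lies in $P$.

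Concretely, set $\tilde z_0 = p_0$ and $\tilde z_i = \min\{p_i,\ \tilde z_{i-1} + \lambda^*(v_i - v_{i-1})\}$ for $i \geq 1$. By construction, $\tilde z$ obeys the $\lambda^*$-constraint and $\tilde z \leq p$ componentwise. An easy induction shows $\tilde z_i \geq z^*_i$, using $p_i \geq z^*_i$ together with $\tilde z_{i-1} + \lambda^*(v_i - v_{i-1}) \geq z^*_{i-1} + \lambda^*(v_i - v_{i-1}) \geq z^*_i$ from the $\lambda^*$-feasibility of $z^*$. Monotonicity $\tilde z_i \geq \tilde z_{i-1}$ follows because both terms in the min dominate $\tilde z_{i-1}$: the second since $\lambda^* \geq 0$, and the first since $p_i \geq p_{i-1} \geq \tilde z_{i-1}$.

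The main step is verifying $\tilde z \in P$. Solving Myerson's IC identities to recover the minimum feasible allocation from a monotone nonnegative payment vector $q$ shows that $q \in P$ iff $\Phi(q) := q_0/v_0 + \sum_{j=1}^n (q_j - q_{j-1})/v_j \leq 1$. Rearranging, $\Phi(q) = q_n/v_n + \sum_{j=0}^{n-1} q_j \bigl(1/v_j - 1/v_{j+1}\bigr)$, a nonnegative linear combination of the $q_j$'s because $v_0 < v_1 < \cdots < v_n$. Since $\tilde z \leq p$ componentwise, $\Phi(\tilde z) \leq \Phi(p) \leq 1$; combined with $\tilde z_0 = p_0 \geq 0$ and the monotonicity established above, this yields $\tilde z \in P$. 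Thus $(\lambda^*, \tilde z)$ is feasible for (\ref{opt_5}) with objective $\sum_i \tilde z_i f_i - \lambda^* \epsilon \geq \sum_i z^*_i f_i - \lambda^* \epsilon$, so by optimality it is also optimal, and the solution may be taken in $P$. The main obstacle is precisely this $\Phi$-step: a priori, truncating a valid payment rule from above need not keep it implementable, and it is the linearity (with nonnegative coefficients) of the $P$-membership functional that makes the clipping argument go through.
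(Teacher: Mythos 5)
Your proof is correct, but it takes a genuinely different route from the paper's. The paper argues about the optimal $z^*$ itself: Property \ref{lm1} shows by a local exchange argument that any optimal $z$ must be non-decreasing (strict improvement of the objective, which tacitly needs $f_{i+1}>0$), and Property \ref{lm2} then shows that a non-decreasing $z$ dominated by some $p\in P$ lies in $P$, by explicitly constructing the minimal allocation $x'_i = x'_{i-1} + (z_i-z_{i-1})/v_i$ and comparing $x'_n$ with the allocation of the dominating $p$. You instead never touch $z^*$ directly: you clip the dominating payment $p$ from above by the $\lambda^*$-budget, obtaining a monotone, $\lambda^*$-feasible $\tilde z$ with $z^* \leq \tilde z \leq p$, and certify $\tilde z\in P$ via the functional $\Phi(q) = q_0/v_0 + \sum_{j}(q_j-q_{j-1})/v_j \leq 1$; after Abel summation the monotonicity of $\Phi$ under componentwise domination is exactly the inequality the paper invokes implicitly in Property \ref{lm2} ("the first inequality holds because $p$ dominates $z$"), and like the paper you need the anchoring $\tilde z_0 = p_0$ (equivalently $z_0=p_0$) for that step to go through, which you do handle. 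What each approach buys: your clipping construction bypasses the exchange argument entirely, so it is insensitive to zero-probability values where the paper's strict-improvement claim degenerates; the trade-off is that you literally prove only that \emph{some} optimal solution of (\ref{opt_5}) lies in $P$ (equivalently, that the optima of (\ref{opt_5}) and (\ref{opt_66}) coincide), rather than that every optimal $z$ does — but that weaker form is precisely what the proof of Theorem \ref{thm_1buy} uses, so nothing downstream is lost. One small caveat shared with the paper: the $\Phi$-characterization (and the paper's $y_0=z_0/v_0$) presumes $v_0>0$, and your monotonicity of $p$ relies on IC plus nonnegative values, which is the same fact the paper uses when it says $p'$ is non-decreasing.
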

In order to prove the above lemma, we make the following observations.

\begin{property}\label{lm1}
If $(\lambda, z)$ is the solution to the optimization problem (\ref{opt_5}), $z$ is non-decreasing.
\end{property}
\begin{proof}
Suppose $(\lambda^*, z)$ optimizes problem (\ref{opt_5}) and there exist $i$ such that $z_i>z_{i+1}$.
By constraint of problem (\ref{opt_5}), there exists $p'\in P$ and $p'$ dominate $z$.
We construct another vector $z^*$ such that $z^*_j = z_j, \forall j \neq i+1$, and $z^*_{i+1} = z_{i}$.
Since $p'$ is non-decreasing, $z^*_{i+1}=z^*_{i}=z_i\leq p'_i\leq p'_{i+1} \text{ and }\forall j\neq i+1,\ z^*_j=z_j\leq p'_j$. Hence $p'$ dominates $z^*$ as well.
Moreover,
\begin{eqnarray*}
z^*_{i+1}-z^*_{i}&=&0 < \lambda^* (v_{i+1} - v_{i}), \\
z^*_{i+2}-z^*_{i+1}&=&z_{i+2}-z_{i}<z_{i+2}-z_{i+1} \leq \lambda^* (v_{i+2} - v_{i+1}),  \\
z^*_{j+1}-z^*_{j}&=&z_{j+1}-z_j \leq \lambda^* (v_{j+1} - v_{j}),  \forall j\notin \{i,i+1\}.
\end{eqnarray*}
Thus, $(\lambda^*, z^*)$ satisfies all the constraints and $z^*\cdot f>z\cdot f$.
Hence we get the contradiction and $z$ is non-decreasing.
\end{proof}

\begin{property}\label{lm2}
Let $z$ be a non-decreasing sequence. If there exists $p\in P$ such that $p$ dominates $z$, we have $z \in P$.
\end{property}
\begin{proof}
For a non-decreasing sequence $z$, denoting $x$ as the allocation for the payment $p$ that dominates $z$,
we prove $z\in P$ by finding a feasible allocation $x'$ for $z$.
For any allocation and payment pair $(x, p)$, it is truthful if and only if
\begin{equation}\label{truthful}
\frac{p_i - p_j}{v_i} \leq x_i - x_j \leq \frac{p_i - p_j}{v_j}, \forall i > j.
\end{equation}
We set $x'_0 = x_0$. For any $i \in [n]$, we set
\begin{equation*}
x'_i = x'_{i-1} + \frac{z_i - z_{i-1}}{v_i}.
\end{equation*}
It is easy to check that  $x'$ is monotone and it satisfies the truthful condition in inequality \ref{truthful}.
We only need to check that $x_n \leq 1$ and it is individual rational.
First, we observe
\begin{eqnarray*}
x'_n &=& x'_0 + \sum_{1\leq i\leq n} \frac{z_i - z_{i-1}}{v_i} \\
&\leq& x_0 + \sum_{1\leq i\leq n} \frac{p_i - p_{i-1}}{v_i} 
\leq x_n \leq 1.
\end{eqnarray*}
The first inequality holds because $p$ dominates $z$, and the maximum is achieved only when $z_i = p_i$ for any $i\geq 1$.
By construction that $x'_0 = x_0$ and $z_0 = p_0$, $x'_0 v_0 - z_0 \geq 0$.
Using induction, for any $i\in [n]$, we have
\begin{eqnarray*}
x'_i v_i - z_i &=& (x'_{i-1} + \frac{z_i - z_{i-1}}{v_i}) v_i - z_i\\
&=& x'_{i-1} v_i - z_{i-1}\\& \geq &x'_{i-1} v_{i-1} - z_{i-1}.
\end{eqnarray*}
Thus the allocation $x'$ and payment $z$ satisfies individual rationality,
and $z \in P$.
%
%
\end{proof}

Combining Property \ref{lm1} and \ref{lm2}, Lemma \ref{zinp} holds.
Now we are ready to prove Theorem \ref{thm_1buy}.

\begin{proof}[Proof of Theorem \ref{thm_1buy}]
By Lemma \ref{zinp}, the solution $(\lambda^*, z^*)$ to problem (\ref{opt_66}) is the solution to problem (\ref{opt_5}). 
Now, we can view $z^*$ as the payment rule of a mechanism and let $p^*_i=z^*_i, \beta^*_i=0, \forall 1\leq i\leq n$. Thus, if a mechanism has the same payment rule as $z^*$, it is an optimal mechanism to problem (\ref{opt_5}). So all we have to do now is find a mechanism that optimizes problem (\ref{opt_66}), which is the solution to problem (\ref{opt_5}) as well.

Let $z\in P$ and $y$ be the corresponding allocation rule such that for any $1 \leq i \leq n$, $y_i - y_{i-1} = \frac{z_i - z_{i-1}}{v_i}$, and $y_0 = \frac{z_0}{v_0}$.
By Myerson's Lemma for discrete distributions, we have
\begin{equation}
\sum_{i=1}^n z_i f_i - \lambda\epsilon = \sum_{i=1}^n y_i \phi_i f_i - \lambda\epsilon. \label{revenue}
\end{equation}

Let $k$ be the index that $\phi_k\geq 0$ and $\forall i<k,\ \phi_i<0$.
Fixing $y_k$ and $\lambda$,
by the constraint of problem (\ref{opt_66}),
we have $y_j-y_{j-1} = \frac{z_j - z_{j-1}}{v_j} \leq \frac{\lambda(v_j - v_{j-1})}{v_j}$.
Therefore, for any $i>k, y_i\leq \min\{y_k+\sum_{j=k+1}^i\frac{\lambda(v_j - v_{j-1})}{v_j},1\}$.
In order to maximize the objective, all the above inequalities should be equalities.
Similarly, for any $i<k, y_i\geq \max\{y_k-\sum_{j=i+1}^k\frac{\lambda(v_j - v_{j-1})}{v_j},0\}$,
and the equalities should hold to maximize the objective.
When $\lambda = 0$,
we can easily get that for any $0 \leq i \leq n$,
$y_i = y_k$, which is optimized at $y_k = 1$.
This characterization is equivalent to the case 1 in Theorem \ref{thm_1buy}.

When $\lambda > 0$, there exist $0\leq a\leq b\leq n$, such that
\begin{equation*}
y_i=\begin{cases}
0& 0 \leq i < a\\
y_k-\sum_{j=i+1}^k\frac{\lambda(v_j - v_{j-1})}{v_j} & a \leq i < k\\
y_k+\sum_{j=k+1}^i\frac{\lambda(v_j - v_{j-1})}{v_j} & k \leq i \leq b\\
1& b < i \leq n
\end{cases}
\end{equation*}

Fixing $a$ and $b$, the allocation $y$ can be uniquely determined by the value $y_a$.
Substituting equation \ref{revenue} and the characterization of variable $y$ back into problem (\ref{opt_66}), we can get the following optimization problem with $y_a$ as a variable.
\begin{equation}\label{opt_8}
\max_{\lambda, y_a}
\ \ y_a f_a \phi_a + \sum_{i=a+1}^{b} (y_a + \sum_{j=a+1}^i \frac{\lambda(v_j - v_{j-1})}{v_j}) f_i \phi_i
+ \sum_{i=b+1}^{n} f_i \phi_i -\lambda\epsilon
\end{equation}
\vspace{-15pt}
\begin{eqnarray*}
y_a + \sum_{j=a+1}^b \frac{\lambda(v_j - v_{j-1})}{v_j} \leq 1, &\\
y_a + \sum_{j=a+1}^{b+1} \frac{\lambda(v_j - v_{j-1})}{v_j} \geq 1, & \mbox{if } b < n\\
y_a \leq \frac{\lambda(v_a - v_{a-1})}{v_a}, & \mbox{if } a > 0\\
y_a \geq 0, 
\lambda > 0. &
\end{eqnarray*}

Since the objective and constraints are linear in $\lambda$ and $y_a$,
the optimum is reached at the boundary of the domain.
%
%
Moreover, when $\lambda > 0$,
for the boundary case, $y_a = 0$ or $y_a = \frac{\lambda(v_a - v_{a-1})}{v_a}$.
In both cases,
we can get that there exist $0 \leq a' \leq b' \leq n$ such that $\lambda = 1 / (\sum_{i=a'}^{b'} \frac{v_i - v_{i-1}}{v_i})$,
and the allocation $y$ takes the following form.
The payment $z$ can be computed accordingly.
\begin{equation*}
y_i=\begin{cases}
0& 0 \leq i < a'\\
\sum_{i=a'}^j \frac{\lambda(v_j - v_{j-1})}{v_j} & a' \leq i \leq b'\\
1& b' < i \leq n.
\end{cases}
\end{equation*}
\begin{equation*}
\hspace{-0.23in}
z_i=\begin{cases}
0& 0 \leq i < a'\\
\lambda (v_i-v_{a'})& a' \leq i \leq b'\\
\lambda (v_{b'}-v_{a'})& b' < i \leq n.
\end{cases}
\end{equation*}
This characterization is equivalent to the case 2 in Theorem \ref{thm_1buy}.
By brute force searching all possible pair of $a, b$ and select the one with highest expected revenue,
we find the desired mechanism that maximizes the objective value of problem (\ref{opt_5}) and (\ref{opt_66}).
Therefore, we have for any $0 \leq i \leq n$, $x_i = y_i, p_i = z_i$
and the mechanism that optimizes problem (\ref{opt_5}) is also an optimal solution for problem (\ref{opt_1}).
Thus finishes the proof of Theorem \ref{thm_1buy}.
\end{proof}

\subsection{Irregular Distribution}\label{sec_irregular}

In this section, we show the result for the single buyer case even when the prior distribution known by the seller is irregular.
First we state the theorem.
\begin{theorem}\label{thm_1buy_irregular}
When there is a single buyer and $\bM$ is the set of all IR-IC mechanisms,
for any continuous distribution $f$,
mechanism
$\cM^* = \arg\max_{\cM \in \bM}\min_{f'\in EMD(f, \epsilon)} Rev(\cM, f')$
with allocation and payment $(x, p)$
takes the following form:
there exist $s \geq 1$ and a set
$\{(a_1,b_1), \dots, (a_s,b_s)\}$
such that $\lambda = 1 / (\sum_{i=1}^s \ln \frac{b_i}{a_i})$, and
\begin{eqnarray*}
x(v) &=&\begin{cases}
0 & 0 \leq v < a_1\\
\lambda (\sum_{j=1}^{i-1} \ln \frac{b_j}{a_j} + \ln \frac{v}{a_i}) \ \ \ \ \ \
& a_i \leq v \leq b_i, \forall 1 \leq i \leq s\\
\lambda \sum_{j=1}^i \ln \frac{b_j}{a_j}
& b_i < v < a_{i+1}, \forall 1 \leq i \leq s-1\\
1 & b_s < v,
\end{cases} \\
\end{eqnarray*}
\end{theorem}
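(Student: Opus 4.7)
The plan is to carry out the LP-duality argument of Theorem \ref{thm_1buy} in continuous form and then, because the virtual value need no longer be monotone, replace the single-threshold optimization step with a superlevel-set argument that naturally produces multiple intervals. First I would take the continuous limit of (\ref{opt_2})--(\ref{opt_66}). The earth mover's constraint becomes $\int_0^\infty |F(v)-F'(v)|\,dv\leq\epsilon$, and strong duality together with the continuous analogue of Lemma \ref{zinp} (proved by the same monotonization) reduce the problem to
\[
\max_{z,\,\lambda\geq 0}\;\int_0^\infty z(v)f(v)\,dv - \lambda\epsilon
\]
over IR-IC payment rules $z$ with $z'(v)\leq\lambda$ almost everywhere. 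Converting payments to allocations via Myerson's lemma, $z'(v)\leq\lambda$ becomes $v\,x'(v)\leq\lambda$, i.e.\ $x'(v)\leq\lambda/v$, and the objective becomes the virtual surplus $\int_0^\infty\phi(v)x(v)f(v)\,dv-\lambda\epsilon$.

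For fixed $\lambda>0$ I would use the standard identity $\int_u^\infty\phi(v)f(v)\,dv=u(1-F(u))$ to rewrite the inner problem as $\max\int_0^\infty u(1-F(u))\,x'(u)\,du$ subject to the pointwise box constraint $x'(u)\in[0,\lambda/u]$ and the total-mass constraint $\int_0^\infty x'(u)\,du\leq 1$. The claim is that the optimal $x'$ is bang-bang: it equals $\lambda/u$ on a set $S=\bigcup_{i=1}^s[a_i,b_i]$ and $0$ off it. Indeed, restricting attention to such indicator forms, the contribution of $S$ is $\lambda\int_S(1-F(u))\,du$ under the log-length budget $\int_S du/u\leq 1/\lambda$, so by a continuous-knapsack argument $S$ is a superlevel set $\{u:u(1-F(u))\geq\tau\}$ of the ``bang-per-buck'' $\psi(u)=u(1-F(u))$. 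Its connected components supply the intervals $[a_i,b_i]$ in the theorem; integrating $x'(v)=\lambda/v$ piece by piece gives $x(v)=\lambda\bigl(\sum_{j<i}\ln(b_j/a_j)+\ln(v/a_i)\bigr)$ on $[a_i,b_i]$, constants in between, and the boundary condition $x(b_s)=1$ forces $\lambda=1/\sum_{i=1}^s\ln(b_i/a_i)$.

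The main obstacle is pinning down $S$ rigorously when $\phi$ is irregular: one must verify that the superlevel-set description is actually optimal even though $\psi$ may be non-monotone and possibly discontinuous, and that the resulting $S$ has finitely many components under the regularity of $f$. I would argue by a direct exchange: if an optimal $x^\ast$ used density $\lambda/u$ on a sub-interval $I$ where $\psi$ is strictly below its value on some excluded interval $I'$ of equal log-measure $\int du/u$, then moving the mass from $I$ to $I'$ strictly increases the objective, contradicting optimality; hence $\{x'=\lambda/u\}$ must be a superlevel set of $\psi$. Joint optimization over $\lambda$ then reduces to the one-dimensional problem $\max_\lambda\bigl(\lambda\int_{S_\lambda}(1-F(u))\,du-\lambda\epsilon\bigr)$, where $S_\lambda$ is the superlevel set of $\psi$ whose log-measure equals $1/\lambda$; its maximizer selects $\lambda$ and hence the intervals $\{(a_i,b_i)\}_{i=1}^s$ in the statement. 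That the resulting $z$ lies in $P$ and solves (\ref{opt_1}) then follows exactly as in the end of the proof of Theorem \ref{thm_1buy}, closing the loop.
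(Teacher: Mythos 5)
Your argument is essentially correct, but it takes a genuinely different route from the paper. The paper does \emph{not} push the duality of Theorem \ref{thm_1buy} to the continuum for the irregular case; instead it guesses the answer and verifies a saddle point: it draws the revenue curve of $F$ in quantile space, picks the unique level $\gamma$ at which flattening the curve (moving mass downward on the intervals where the curve exceeds $\gamma$) yields a distribution $G$ with $EMD(G,F)=\epsilon$, observes that the stated $\cM^*$ is Myerson-optimal for $G$ and that $G$ is the revenue-minimizing distribution for $\cM^*$, and concludes via $\min_{f'}Rev(\cM',f')\leq Rev(\cM',G)\leq Rev(\cM^*,G)=\min_{f'}Rev(\cM^*,f')$. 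Your derivation instead re-runs the LP/duality reduction in continuous form and replaces the regularity-dependent step of Theorem \ref{thm_1buy} (which pivots on the single sign change of $\phi$) by the continuous-knapsack/bang-bang argument on $\psi(u)=u(1-F(u))$; since superlevel sets of $\psi$ are exactly the value intervals where the quantile-space revenue curve lies above a threshold, your intervals $\{(a_i,b_i)\}$ coincide with the paper's, with your multiplier $\lambda$ playing the role of the paper's level $\gamma$ (tightness of the EMD budget corresponding to the choice of $\gamma$ with $EMD(G,F)=\epsilon$). What each buys: the paper's route is short and needs no continuum duality, but it is a verification that presupposes the answer and, notably, asserts without proof that $G$ is worst-case for $\cM^*$; your route is constructive and derives the structure, at the cost of several glossed technicalities --- the continuous limits of strong duality and of Lemma \ref{zinp}, the justification that the allocation budget binds (i.e.\ $x(b_s)=1$, which as you hint follows because otherwise the objective is strictly increasing in $\lambda$), the degenerate posted-price analogue of case 1 of Theorem \ref{thm_1buy} (which your absolutely continuous $x$ with $x(0)=0$ excludes, though the theorem statement omits it too), and finiteness of the number of components of the superlevel set, which you correctly flag and which the paper also implicitly assumes. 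These gaps are comparable in severity to the informality of the paper's own argument, so I would count your proposal as a valid alternative proof sketch rather than a flawed one.
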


Note that given the allocation rule, the payment can be computed using Myerson's payment identity \cite{myerson1981optimal}. 
Here, instead of proving that Theorem \ref{thm_1buy_irregular} is correct using duality,
we present another intuitive idea to show the correctness of the theorem.
This idea will help us have a further understanding of the result.
First, we plot the revenue curve of the known distribution,
where the $x$-axis is the quantile of the distribution,
i.e., the probability that is larger than or equal to a certain value.

\begin{figure}[htbp]
\vspace{10pt}
\begin{center}
\centering
\setlength{\unitlength}{1.3cm}
\thinlines
\begin{picture}(6.2,4)

\put(0,0){\vector(1,0){6.4}}
\put(0,0){\vector(0,1){3.9}}

\put(-0.4,2.5){$\gamma$}
\put(0,4.1){$R(q)$}
\put(1.2,3.3){$F$}
\put(1.2,2.3){$G$}

{\thicklines
\qbezier(0,0)(0.5,5)(3,2)
\qbezier(3,2)(3.5,2)(4,3.5)
\qbezier(4,3.5)(4.5,4)(6,0)
\put(0.65,2.6){\line(1,0){1.8}}
\put(3.67,2.6){\line(1,0){1.23}}
}

\put(0.59,-0.3){$q_s$}
\put(2.38,-0.3){$q'_s$}
\put(3.6,-0.3){$q_1$}
\put(4.84,-0.3){$q'_1$}

\put(0,-0.3){$0$}
\put(5.9,-0.3){$1$}
\put(6.4,-0.3){$q$}

\multiput(0,2.6)(0.2,0){25}{\line(1,0){0.1}}

\multiput(0.65,0)(0,0.2){13}{\line(0,1){0.1}}
\multiput(2.44,0)(0,0.2){13}{\line(0,1){0.1}}
\multiput(3.67,0)(0,0.2){13}{\line(0,1){0.1}}
\multiput(4.9,0)(0,0.2){13}{\line(0,1){0.1}}

\end{picture}
\end{center}
\caption{The revenue curve of the known distribution $F$ and the distribution $G$ with minimum revenue.}
\label{fig:regular:2}
\end{figure}
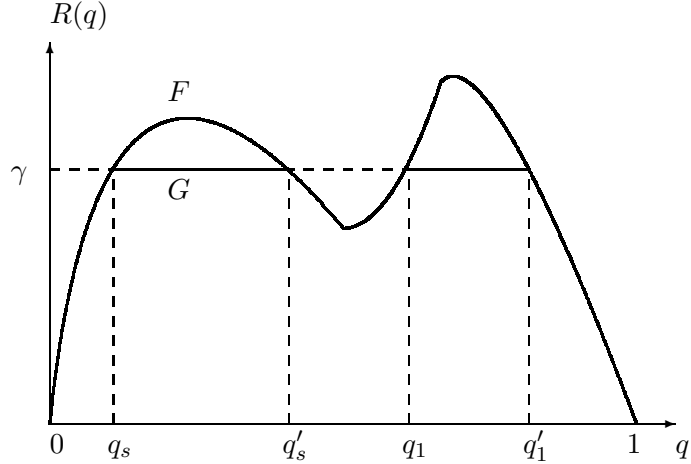

From the graph, we can easily verify that for any constant $\epsilon$,
there exists a unique $\gamma$ that intersects the
distribution at quantiles in set $Q = \{(q_1, q'_1), \dots, (q_s, q'_s)\}$,
and the resulting distribution $G$ satisfies that $EMD(G,F) = \epsilon$.
Letting $\{(a_1,b_1), \dots, (a_s,b_s)\}$ be the value that corresponds to
the set $Q$ under distribution $F$,
according to the characterization of the single item optimal Bayesian mechanism \cite{myerson1981optimal,hartline2007profit},
the mechanism $\cM^*$ designed in Theorem \ref{thm_1buy_irregular} is optimal for distribution $G$.
Moreover, according to the payment rule defined in Theorem \ref{thm_1buy_irregular},
the distribution with minimum revenue is exactly distribution $G$.
Therefore, considering any other mechanism $\cM'$, we have
\begin{eqnarray*}
\min_{f'\in EMD(f, \epsilon)} Rev(\cM', f') &\leq &Rev(\cM', G) \leq Rev(\cM^*, G) \\
&=& \min_{f'\in EMD(f, \epsilon)} Rev(\cM^*, f'),
\end{eqnarray*}
and Theorem \ref{thm_1buy_irregular} holds.

\subsection{Deterministic Mechanism}
In real world, the complex randomized algorithm may not be applicable.
Instead, people will try to approximate the optimal with simple deterministic mechanisms.
When the distribution is known and there is a single item,
the deterministic mechanism is indeed optimal.
Here in this section, we will show that when
the known distribution is within
$\epsilon$ earth mover's distance of the true distribution,
the deterministic mechanism is still a good approximation to the optimal randomized mechanism with respect to the max-min objective.
Formally, we have the following theorem.
Note that in this section, we do not require the distribution known by the seller is regular.

\begin{theorem}\label{thm_d_up}
When there is a single buyer and $\bM$ is the set of all IR-IC mechanisms,
for any distribution $f$, let
$$R = \max_{\cM\in \bM}\min_{f'\in EMD(f, \epsilon)} Rev(\cM, f')$$
For any constant $\epsilon \leq R$, there exists a deterministic mechanism $\cM$ such that
$$\min_{f'\in EMD(f, \epsilon)} Rev(\cM, f') \geq R - 2\sqrt{\epsilon R} + \epsilon.$$
\end{theorem}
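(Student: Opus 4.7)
The plan is to exhibit a specific posted-price (deterministic) mechanism and show its worst-case revenue is at least $(\sqrt R-\sqrt\epsilon)^2 = R-2\sqrt{\epsilon R}+\epsilon$, by leveraging the explicit form of the optimal randomized mechanism $\cM^*$ from Theorem~\ref{thm_1buy}. If $\cM^*$ falls into Case~1 (post $v_0$) then it is already deterministic and the claim is trivial, so I focus on Case~2, where $\cM^*$ is a lottery supported on $[v_a,v_b]$. The structural input I extract is the geometric picture from Section~\ref{sec_irregular}: the worst-case distribution $G$ for $\cM^*$ flattens the revenue curve at level $R$ on $[v_a,v_b]$, giving $\bar G(v)=R/v$ there. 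Because the adversary transports mass only downward (the payments $p^*$ are monotone in the active region), we have $\bar F(v)\ge\bar G(v)=R/v$ on $[v_a,v_b]$; equivalently, the revenue curve of $f$ sits above the horizontal line of height $R$ throughout $[v_a,v_b]$.

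I would then take the deterministic mechanism to post a price $r^*\in[v_a,v_b]$ (for concreteness, $r^*=v_a$). Against any $f'\in EMD(f,\epsilon)$ the adversary's best response is to move mass $m$ from $[r^*,r^*+\delta]$ to just below $r^*$, producing $\bar F'(r^*)=\bar F(r^*+\delta)$. Integration by parts rewrites the EMD cost as
\begin{equation*}
\int_{r^*}^{r^*+\delta}\bar F(v)\,dv-\delta\bar F(r^*+\delta)\ \le\ \epsilon.
\end{equation*}
Substituting $\bar F(v)\ge R/v$ into the integral and $\bar F(r^*+\delta)\ge R/(r^*+\delta)$ into the second term reduces this to the one-dimensional constraint $R[\ln(1+t)-t/(1+t)]\le\epsilon$ on the shift parameter $t=\delta/r^*$. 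At the same time, the worst-case revenue obeys $r^*\bar F'(r^*)=r^*\bar F(r^*+\delta)\ge r^*\cdot R/(r^*+\delta)=R/(1+t)$.

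To finish, I introduce $s=t/(1+t)$, so that the constraint reads $-\ln(1-s)-s\le\epsilon/R=:c$. Using the Taylor bound $-\ln(1-\sqrt c)\ge\sqrt c$ (valid for $c\in[0,1]$) a short calculation shows that at $s=2\sqrt c-c$ the left-hand side strictly exceeds $c$; since $-\ln(1-s)-s$ is increasing in $s$, the maximum admissible shift therefore satisfies $s\le 2\sqrt c-c$. Plugging back,
\begin{equation*}
r^*\bar F'(r^*)\ \ge\ R/(1+t)\ =\ R(1-s)\ \ge\ R(1-\sqrt c)^2\ =\ (\sqrt R-\sqrt\epsilon)^2,
\end{equation*}
which is exactly the target bound.

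The main technical obstacle will be the edge case where the adversary's shift interval $[r^*,r^*+\delta]$ would extend past $v_b$, since the structural bound $\bar F(v)\ge R/v$ only holds on $[v_a,v_b]$. I would handle this by choosing $r^*$ sufficiently close to $v_a$ so that the assumption $\epsilon\le R$ keeps $r^*+\delta\le v_b$; when $v_b/v_a$ is so small that this fails, the mechanism is already nearly deterministic (since $\lambda=1/\ln(v_b/v_a)$ is large) and a separate short argument suffices. For irregular distributions, Theorem~\ref{thm_1buy_irregular} produces multiple ironing intervals $\{(a_i,b_i)\}$, and the same argument applies by picking $r^*$ inside any one of them.
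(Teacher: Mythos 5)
There is a genuine gap at the step where you turn the EMD budget into the one-dimensional constraint $R[\ln(1+t)-t/(1+t)]\le\epsilon$. The adversary's cost is $\int_{r^*}^{r^*+\delta}\bar F(v)\,dv-\delta\,\bar F(r^*+\delta)$, and to conclude anything from the budget ``cost $\le\epsilon$'' you need a \emph{lower} bound on this quantity, which requires an \emph{upper} bound on the subtracted term $\delta\,\bar F(r^*+\delta)$; the only structural fact available, $\bar F(r^*+\delta)\ge R/(r^*+\delta)$, points the wrong way. Concretely, if $f$ has almost no mass in $(r^*,r^*+\delta)$ but plenty of mass higher up in $[v_a,v_b]$ (so $\bar F$ is nearly constant and strictly above $R/v$ there, which is perfectly consistent with $\bar F\ge R/v$), the adversary can make $t$ as large as it likes at essentially zero cost, and your claimed constraint on $t$ is false. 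In that regime the revenue is actually fine, but for a different reason (the surviving quantile $\bar F(r^*+\delta)$ is large), which shows the two halves of your argument --- the cost constraint bounding $t$, and the revenue bound $R/(1+t)$ --- cannot be decoupled: a correct version must carry $q=\bar F(r^*+\delta)$ through both, e.g.\ revenue $\ge\max\{(R\ln(1+t)-\epsilon)/t,\;R/(1+t)\}$, and then verify that the balance point still clears $(\sqrt R-\sqrt\epsilon)^2$ for all $\epsilon\le R$; you have not done this and it is not automatic. In addition, the structural inputs you import ($\bar G(v)=R/v$ on the ironed interval, $R$ equal to the flat level, hence $\bar F(v)\ge R/v$ there) are only argued informally in the paper, and your handling of the edge case $r^*+\delta>v_b$ and of the irregular multi-interval case is a sketch rather than an argument.

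For contrast, the paper's proof needs none of this machinery. It observes $R\le OPT(f)$, lets $a$ be the optimal posted price for $f$ with sale probability $b$ (so $ab=OPT(f)$), posts the discounted price $x=a-\sqrt{a\epsilon/b}$, and notes that moving mass from above $a$ to below $x$ costs at least $a-x$ per unit of probability, so under any $f'\in EMD(f,\epsilon)$ the sale probability at price $x$ is at least $b-\epsilon/(a-x)$. This yields worst-case revenue at least $x(b-\epsilon/(a-x))=OPT(f)-2\sqrt{\epsilon\,OPT(f)}+\epsilon\ge R-2\sqrt{\epsilon R}+\epsilon$, with no appeal to Theorem~\ref{thm_1buy} or to the shape of the worst-case distribution. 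I suggest either adopting that route or redoing your cost--revenue trade-off jointly in $q$ as indicated above.
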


\begin{proof}
It is obvious that $R \leq OPT(f)$, where $OPT(f)$ is the optimal revenue of $f$.
So we only need to prove that
$$\min_{f'\in EMD(f, \epsilon)} Rev(\cM, f') \geq OPT(f) - 2\sqrt{\epsilon \cdot OPT(f)} + \epsilon.$$

Note that for distribution $f$, there exists a simple deterministic mechanism that achieve optimal revenue.
Assume that this mechanism posts price $a$ and sells the item when the buyer's value is larger than or equal to $a$.
We assume the item is sold with probability $b$.
Clearly, $ab = OPT(f)$.
Suppose we have a new mechanism that posts price $x \leq a - \frac{\epsilon}{b}$.
Note that since $x \geq 0$, $\epsilon \leq ab$.
Next we bound the probability $\Pr_{v\sim f' \in EMD(f,\epsilon)}(v \geq x)$.
Note that the probability of $v\geq x$ is at least the probability of $v \geq a$ in distribution $f$
minus the probability that can be moved from above $a$ to below $x$.
Hence, we have
\begin{eqnarray*}
\Pr_{v\sim f' \in EMD(f,\epsilon)}(v \geq x)
\geq \Pr_{v\sim f}(v \geq a) - \frac{\epsilon}{a-x}
= b - \frac{\epsilon}{a-x}.
\end{eqnarray*}
Therefore,
$$\min_{f'\in EMD(f, \epsilon)} Rev(\cM, f')
\geq x (b - \frac{\epsilon}{a-x}). $$
By optimizing $x$ and set it as $a - \sqrt{\frac{a\epsilon}{b}}$, we have
\begin{eqnarray*}
\min_{f'\in EMD(f, \epsilon)} Rev(\cM, f')
&\geq& ab - 2\sqrt{ab\epsilon} + \epsilon \\
&=& OPT(f) - 2\sqrt{\epsilon \cdot OPT(f)} + \epsilon.
\end{eqnarray*}
Thus finishes the proof of Theorem \ref{thm_d_up}.
\end{proof}

\begin{theorem}\label{thm_d_low}
When there is a single buyer and $\bM$ is the set of all IR-IC mechanisms,
for any constant $\epsilon \leq R$,
there exists a distribution $f$ such that
for any deterministic mechanism $\cM$,
letting $R = \max_{\cM\in \bM}\min_{f'\in EMD(f, \epsilon)} Rev(\cM, f')$,
$$\min_{f'\in EMD(f, \epsilon)} Rev(\cM, f') \leq R - \frac{1}{2}\sqrt{\epsilon R}.$$
\end{theorem}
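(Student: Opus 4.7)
The plan is to take $f$ to be a point mass at a single value $v = h$, with $h$ chosen sufficiently large relative to $\epsilon$; since both revenue and EMD scale linearly along the value axis, this is equivalent to the point mass at $1$ with effective perturbation budget $\epsilon/h$ that we can drive to $0$ by enlarging $h$.

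For this $f$ I would compute both sides explicitly. Every deterministic single-buyer IR-IC mechanism is a posted price $p$; the adversary's cheapest attack shifts $\delta = \epsilon/(h-p)$ mass from $h$ down to just below $p$ (at total EMD cost $\epsilon$), yielding worst-case revenue $p(1-\epsilon/(h-p))$, and the optimum $p^\star = h-\sqrt{\epsilon h}$ gives $D = h(1-\sqrt{\epsilon/h})^2 = h - 2\sqrt{\epsilon h} + \epsilon$. On the randomized side, by the continuous analogue of Theorem~\ref{thm_1buy} (namely Theorem~\ref{thm_1buy_irregular}), the max-min mechanism is the logarithmic menu with allocation $x(v)=\ln(v/a)/\ln(h/a)$ on $[a,h]$ and payment $p(v)=(v-a)/\ln(h/a)$; against it the adversary again shifts mass from $h$ to just below $a$, where the payment drops to $0$, producing worst-case revenue $(h-a-\epsilon)/\ln(h/a)$. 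The first-order condition in $a$ simplifies to $(h-a-\epsilon)/a = \ln(h/a)$, which upon substitution yields the clean identity $R = a^\star$, where $a^\star \in (0,h)$ is the unique solution of $a(1+\ln(h/a)) = h-\epsilon$.

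It remains to verify $D \le R - \tfrac12\sqrt{\epsilon R}$. Setting $a^\star = h(1-u)$ turns the defining equation into $(1-u)(1-\ln(1-u)) = 1-\epsilon/h$, whose Taylor expansion gives $u = \sqrt{2\epsilon/h} + O(\epsilon/h)$. Consequently
\[
R - D \;=\; (2-\sqrt{2})\sqrt{\epsilon h} + O(\epsilon),
\qquad
\tfrac{1}{2}\sqrt{\epsilon R} \;=\; \tfrac{1}{2}\sqrt{\epsilon h} + O(\epsilon),
\]
so the strict inequality $2-\sqrt{2} > \tfrac12$ (with slack $\tfrac32-\sqrt{2}>0$) delivers the claim once $h$ is taken large enough, concretely $h \gtrsim \epsilon/(\tfrac32-\sqrt{2})^2$; in that same regime $R \approx h \gg \epsilon$, so the standing hypothesis $\epsilon \le R$ is automatically satisfied. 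The main obstacle is controlling the $O(\epsilon)$ lower-order error terms carefully so that the leading slack $(\tfrac32-\sqrt{2})\sqrt{\epsilon h}$ is not drowned out; the scaling trick of choosing $h$ much larger than $\epsilon$ is designed precisely to place us in the asymptotic regime where this bookkeeping is clean and the constants $2-\sqrt{2}$ and $\tfrac12$ are sharp.
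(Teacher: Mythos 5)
Your construction is correct, but it is genuinely different from the paper's: the paper takes $f$ to be the equal-revenue distribution on $[1,h]$ with $h\to\infty$, where $R\to 1$ while every posted price loses $\Theta(\sqrt{\epsilon})$; you instead take a point mass at a large $h$, compute the deterministic max-min value exactly as $D=\max_p p\bigl(1-\tfrac{\epsilon}{h-p}\bigr)=h-2\sqrt{\epsilon h}+\epsilon$, and beat it with the logarithmic menu, whose worst case is $\tfrac{h-a-\epsilon}{\ln(h/a)}$, optimized at $a^\star$ with $a^\star(1+\ln(h/a^\star))=h-\epsilon$, so $R\geq a^\star\approx h-\sqrt{2\epsilon h}$; the claim then follows from $2>\sqrt{2}+\tfrac12$. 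Two remarks. First, your appeal to Theorem~\ref{thm_1buy_irregular} to assert that the logarithmic menu is \emph{exactly} optimal is shaky (a point mass is not a continuous distribution), but it is also unnecessary: the menu is a feasible IR-IC mechanism, so it only yields the lower bound $R\geq a^\star$, and since $R\mapsto R-\tfrac12\sqrt{\epsilon R}$ is increasing in $R$, proving $D\leq a^\star-\tfrac12\sqrt{\epsilon a^\star}$ already gives $D\leq R-\tfrac12\sqrt{\epsilon R}$; you should state this monotonicity step explicitly. (The lower bound itself is clean: $p(v)\geq p(h)-\tfrac{(h-v)^+}{\ln(h/a)}$ pointwise, so any $f'$ within EMD $\epsilon$ yields revenue at least $\tfrac{h-a-\epsilon}{\ln(h/a)}$.) Second, the error bookkeeping you flag as the main obstacle can be closed non-asymptotically: since $(1-t)(1-\ln(1-t))\leq 1-t^2/2$ for $t\in[0,1)$ (its derivative $\ln(1-t)\leq -t$), monotonicity of $a\mapsto a(1+\ln(h/a))$ gives the exact bound $a^\star\geq h-\sqrt{2\epsilon h}$, and then $D\leq a^\star-\tfrac12\sqrt{\epsilon a^\star}$ reduces to $(2-\sqrt{2}-\tfrac12)\sqrt{\epsilon h}\geq\epsilon$, i.e.\ $h\geq\epsilon/(\tfrac32-\sqrt{2})^2$, exactly the threshold you guessed; in that regime $\epsilon\leq R$ holds automatically. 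So your route is sound and arguably more explicit than the paper's, at the cost of a thinner margin ($2-\sqrt{2}$ versus the paper's $\sqrt{2\epsilon}$-sized gap against $R\approx 1$), which is why the careful constant tracking matters here but not there.
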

\begin{proof}
Consider the equal revenue distribution on support $[1,h]$.
Applying the result in Theorem \ref{thm_1buy},
we can calculate $R \approx 1-\frac{\epsilon}{\ln h}$.
When we set $h \to \infty$, we know that the maximin revenue approaches~$1$.

Now we consider the deterministic mechanism which posts price $x \leq h$.
We divide the analysis into 2 cases: $x < 1$ and $x \geq 1$.
When $x \geq 1$,
suppose $b$ is the value such that
exactly the probability between $x$ and $b$ is moved to below $x$.
Then, we have
\begin{eqnarray*}
\int_x^b (\frac{1}{v} - \frac{1}{b}) dv &=& \epsilon \\
-1 + \frac{x}{b} + \ln \frac{b}{x} &=& \epsilon.
\end{eqnarray*}
When $\epsilon$ is small,
the revenue of posting price $x$ is
$\frac{x}{b} \leq \frac{1}{1 + \sqrt{\epsilon} + \frac{\epsilon}{2}} \leq 1-\frac{\sqrt{\epsilon}}{2}$.

When $x < 1$,
also let $b$ be the value such that
exactly the probability between $x$ and $b$ is moved to below $x$.
Then, we have
\begin{eqnarray*}
\int_1^b (\frac{1}{v} - \frac{1}{b}) dv + (1-\frac{1}{b})(1-x) &=& \epsilon \\
\ln b + \frac{x}{b} - x &=& \epsilon.
\end{eqnarray*}
When $\epsilon$ is small, we have $x \approx b - \frac{\epsilon b}{b-1}$,
and the revenue of posting price $x$ is
$\frac{x}{b} \approx 1 - \frac{\epsilon}{b-1}$,
which is increasing with respect to $b$.
However, the maximum value of $b$ is reached when $x$ approaches 1.
Therefore, the maximum revenue is achieved in case 1 when $x\geq 1$,
and $\min_{f'\in EMD(f, \epsilon)} Rev(\cM, f') \leq 1-\frac{\sqrt{\epsilon}}{2}$.
The exact form of the inequality in the statement of Theorem \ref{thm_d_low} can be similarly derived by setting the known distribution $f$ as the equal revenue distribution on $[R, \infty)$.
\end{proof}

\section{Multiple Buyer Case}\label{sec_multi}
In this section, we consider the problem where the seller tries to sell a single item to $m$ i.i.d. buyers. 
Note that for the multi-buyer case, the max-min problem is not convex. 
Therefore, we cannot apply the duality approach as in the single buyer case 
to have a characterization for the distribution with minimum revenue. 
In this section, we will focus on a special class of mechanisms, second price mechanism with reserves, 
and as we will show later, the characterization for this set of mechanisms is not trivial. 
Moreover, in the single item environment, 
when the distributions are i.i.d. and regular, 
the second price mechanism with a fixed reserve is optimal for revenue maximization. 
Therefore, it is nature to conjecture that this simple mechanism still has approximately optimal performance in our robust setting. 
We leave the characterization of the approximation ratio of the second price mechanism with fixed reserve as an open problem. 
In this section, we will show how to efficiently compute the optimal robust reserve price for the second price auction. 
In our following analysis, we assume that the known distribution $f$ is a Lipschitz continuous distribution.
Note that for any distribution $f' \in EMD(f, \epsilon)$, 
$f'$ may not be Lipschitz continuous.

\subsection{Second Price Mechanism}
Before stating our computational result, we start with the characterization of the distribution within $\epsilon$ earth-mover's distance that generates the minimum revenue for the second price auction $\cM$.
This characterization will help us understand the minimization part of our max-min problem for the multi-buyer case,
which eventually will help us design the algorithms for finding the optimal reserve in the robust setting.

In second price auction, the expected revenue is the expected value of the second highest buyer. 
Intuitively, the distribution that generates minimum revenue is the one that differs from the known distribution 
only in the quantile corresponding to the second highest buyer. 
We formalize this idea and prove it in Theorem \ref{thm_sec}. 
%

\begin{theorem}\label{thm_sec}
Let $\cM$ be the second price mechanism,
let $g \triangleq \arg\min_{f' \in EMD(f, \epsilon)}Rev(\cM, f')$
denote the distribution with minimum revenue for mechanism $\cM$,
and let $G$ be its corresponding cumulative distribution.
Let $k$ be the smallest number such that $\frac{m-2}{m-1} \leq G(k)$, we have
\begin{enumerate}
\item $EMD(G, F) = \epsilon$.
\item There exists $l \geq k$ such that $G(v) = G(k)$ for any $k \leq v \leq l$, and $G(v) = F(v)$ for any $v > l$.
\item For any $v < k, G(v) = F(v)$.
\end{enumerate}
\end{theorem}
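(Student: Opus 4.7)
\textbf{Proof proposal for Theorem~\ref{thm_sec}.}

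The plan is to reparameterize the adversary's minimization in quantile space, where it becomes a linear program that can be solved by a Lagrangian / exchange argument. First, for $m$ i.i.d.\ draws from a distribution with CDF $G$, the density of the second-highest order statistic is $m(m-1)G^{m-2}(1-G)g$. A change of variables $q = G(v)$ then gives
\[
\mathrm{Rev}(\cM, G) \;=\; \int_0^1 G^{-1}(q)\, w(q)\, dq, \qquad w(q) \;=\; m(m-1)\, q^{m-2}(1-q),
\]
while $EMD(G,F) = \int_0^1 |F^{-1}(q) - G^{-1}(q)|\, dq$. Differentiating shows $w$ is unimodal with unique maximum at $c^\star := \tfrac{m-2}{m-1}$, which is exactly the threshold appearing in the theorem. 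By revenue monotonicity under stochastic dominance (as invoked in Section~\ref{sec_single}), any mass moved strictly upward only increases revenue, so the adversary's optimum satisfies $G \geq F$ pointwise, equivalently $G^{-1} \leq F^{-1}$. The EMD then simplifies to $\int_0^1(F^{-1}(q) - G^{-1}(q))\,dq$, and the adversary's problem becomes a linear program in $G^{-1}$ subject to monotonicity, the pointwise bound $G^{-1} \leq F^{-1}$, and a linear budget constraint.

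For the tightness of the budget, suppose $EMD(G,F) < \epsilon$ at the optimum. Because $w(c^\star) > 0$, opening up a small quantile interval around $c^\star$ and further lowering $G^{-1}$ on it (while preserving monotonicity, using that $F$ is Lipschitz continuous) strictly decreases $\int G^{-1} w$ at a positive first-order rate, contradicting optimality. Hence $EMD(G,F) = \epsilon$, which is the first claim.

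For the structural claim, introduce a Lagrange multiplier $\lambda \geq 0$ for the budget and consider the pointwise Lagrangian $G^{-1}(q)(w(q) - \lambda)$, to be minimized over $G^{-1}(q) \in [0, F^{-1}(q)]$ subject to monotonicity. Since $w$ is unimodal with peak at $c^\star$, the strict superlevel set $\{q : w(q) > \lambda\}$ is an open interval $(q_1, q_2)$ containing $c^\star$ (so in particular $q_2 > c^\star$), and on its complement $w(q) \leq \lambda$ so the pointwise optimum is $G^{-1}(q) = F^{-1}(q)$. Inside $(q_1, q_2)$ the unconstrained pointwise optimum would be $0$, but monotonicity forces $G^{-1}(q) \geq \lim_{q' \uparrow q_1} G^{-1}(q') = F^{-1}(q_1)$, so the feasible minimum is attained at the constant value $G^{-1}(q) = F^{-1}(q_1)$ throughout. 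Translating back to value space via $k := F^{-1}(q_1)$ and $l := F^{-1}(q_2)$ gives $G(v) = F(v)$ for $v < k$, $G(v) \equiv q_2$ on $[k, l]$, and $G(v) = F(v)$ for $v > l$, which is parts~2 and~3. Since $G(k) = q_2 \geq c^\star$ while $G(v) \leq q_1 < c^\star$ for $v < k$, the value $k$ is precisely the smallest at which $G$ reaches the threshold, matching the theorem's definition.

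The main obstacle will be handling the monotonicity constraint on $G^{-1}$ in combination with the pointwise Lagrangian: a naive pointwise optimization ignores this side constraint, so the ``fill to level $F^{-1}(q_1)$ on $(q_1,q_2)$'' claim must be justified by a swap-and-improve argument showing that any $G^{-1}$ deviating from this pattern can be modified, at zero net EMD cost, to strictly decrease the revenue. Secondary care is needed when the superlevel interval $(q_1,q_2)$ meets the support endpoints of $F$, and when $F$ has flat regions (so $F^{-1}$ is only defined up to left/right-continuous conventions); both are routine once the core Lagrangian picture is in place.
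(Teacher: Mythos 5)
Your quantile-space reformulation is a genuinely different route from the paper's: the paper stays in value space, writes $Rev(\cM,F') = \int_0^\infty\bigl(1+(m-1)F'^m(v)-m F'^{m-1}(v)\bigr)dv$, disposes of the monotonicity constraint by a rearrangement argument, and then proves parts 2 and 3 by local $\pm\delta$ perturbations of $G$, using that $\psi(x)=1+(m-1)x^m-mx^{m-1}$ is convex for $x\geq\frac{m-2}{m-1}$ and concave below. Your identities $Rev=\int_0^1 G^{-1}(q)w(q)dq$ with $w(q)=m(m-1)q^{m-2}(1-q)$ (which, usefully, remains valid when $G$ has atoms, unlike the order-statistic-density derivation you start from), $EMD=\int_0^1|F^{-1}-G^{-1}|dq$, the reduction to $G\geq F$, and the tightness of the budget are all fine (Lipschitz continuity of $F$ plays no real role in that last step).

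However, the Lagrangian step --- the heart of parts 2 and 3 --- fails as stated, and it is precisely the obstacle you flagged as routine. Your candidate ($G^{-1}=F^{-1}$ off $(q_1,q_2):=\{w>\lambda\}$, constant $F^{-1}(q_1)$ inside) does not minimize $\int_0^1 G^{-1}(q)(w(q)-\lambda)dq$ over monotone $G^{-1}\leq F^{-1}$: lowering $G^{-1}$ to the value $F^{-1}(q_1-\delta)$ on all of $(q_1-\delta,q_2)$ changes the objective by at most $\Delta\cdot\delta\cdot(\lambda-w(q_1-\delta))-\Delta\cdot\int_{q_1}^{q_2}(w-\lambda)dq$, where $\Delta=F^{-1}(q_1)-F^{-1}(q_1-\delta)\geq\delta/L>0$ ($L$ the Lipschitz constant of $F$); since $w(q_1)=\lambda$, the first term is $o(\Delta)$ while the second is a fixed negative multiple of $\Delta$, so your configuration is strictly improvable at that $\lambda$. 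Consequently the pooled interval is \emph{not} the superlevel set of $w$: at the true optimum, stationarity forces $\lambda$ to equal the \emph{average} of $w$ over the pooled quantile interval (an ironing-type condition), not its endpoint value. Moreover, unimodality of $w$ by itself does not exclude several pooled intervals, or segments where $G^{-1}$ sits strictly between the fill level and $F^{-1}$; ruling these out, and showing that the single pooled interval straddles $\frac{m-2}{m-1}$, is exactly the content of the paper's perturbation arguments exploiting the sign change of $\psi''$. Your plan can be repaired --- by zero-net-EMD swap arguments (essentially the paper's proof transplanted to quantile space), or by tracking where increments of $G^{-1}$ may be placed against $W(t)=\int_t^1(w(q)-\lambda)dq$ --- but the step you deferred is the crux, and the specific Lagrangian configuration you propose to justify is not the optimizer.
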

\begin{proof}
First note that if $EMD(G, F) < \epsilon$, by simply moving some probability from higher value to lower value, 
by revenue monotonicity, 
the expected revenue will decrease.
Therefore, we have $EMD(G, F) = \epsilon$.

Note that the probability that the second price is larger than a value $v$ is
$1$ minus the probability that all value is below value $v$
and the probability that only one value is above value~$v$.
That is,
\begin{eqnarray*}
\Pr(\text{second price}> v) &=& 1-F'^m (v)- m \cdot F'^{m-1}(v) \cdot (1-F'(v)) \\
&=& 1 + (m-1)F'^m (v) - m \cdot F'^{m-1}(v).
\end{eqnarray*}
Therefore, the expected revenue of running second price  mechanism $\cM$ with distribution $f'$ is
\begin{eqnarray}
&&Rev(\cM, f')
= \int_{0}^{\infty} \Pr(\text{second price}> v)dv \nonumber \\
&=& \int_{0}^{\infty} (1 + (m-1)F'^m (v) - m \cdot F'^{m-1}(v)) dv. 
\label{eq:rev} 
\end{eqnarray}

Since we require that distribution $f' \in EMD(f, \epsilon)$,
we know that distribution $f'$ satisfies the constraints that
\begin{equation}
F'(\infty)=1,
\int_{v=0}^{\infty} |F'(v) - F(v)|  dv \leq \epsilon,
\text{and } F'(v') \leq F'(v), \forall v' \leq v.
\label{constr}
\end{equation} 
Moreover, since moving the distribution from a lower value to a higher value will not reduce the revenue.
In order to find the distribution that generates minimum revenue, we can assume without loss of generality that
$F(v) \leq F'(v), \forall v$.
Next we show that with this assumption, 
minimizing equation \ref{eq:rev} subject to constraint \ref{constr} is equivalent to 
minimizing equation \ref{eq:rev} subject to the following constraint, 
with a changing of variable $F^*$. 
\begin{equation*}
F^*(\infty)=1,
\int_{v=0}^{\infty} |F^*(v) - F(v)|  dv \leq \epsilon,
\text{and } F(v) \leq F^*(v), \forall v.
\end{equation*}
Let $J^*(y)$ be the measure of the set $\{v| v>0 \text{ and } F^*(v) \leq y\}$
and let $F'(v) = \inf\{y | J^*(y) \geq v\}$.
Note that in constraint \ref{constr}, $F'(\infty)=1$, $F'(v)$ is monotone non-decreasing and the measure of the set
$\{v|v\geq 0\text{ and }F'(v)\leq y\}$ equals $J^*(y)$.
Moreover, let $\psi(x) = 1 + (m-1)x^m - m x^{m-1}$ and according to Lebesgue integration,
$$Rev(\cM, F') = \int_{v=0}^\infty\psi(v) \ dJ^*(v) = Rev(\cM, F^*).$$
Let $J(y)$ be the measure of the set $\{v| v>0 \text{ and } F(v) \leq y\}$.
Since for any $v$, $F^*(v) \geq F(v)$, we have
$$
\{v| v>0 \text{ and } F^*(v) \leq y\} \subseteq
\{v| v>0 \text{ and } F(v) \leq y\},
$$
which indicates that for any value $v$, $J^*(v) \leq J(v)$, and $F'(v) \leq F(v)$.
Finally, we show that
\begin{eqnarray*}
&&\int_{v=0}^{\infty} |F'(v) - F(v)| dv=\int_{v=0}^{\infty} [F'(v) - F(v)] dv \\
&=&\int_{v=0}^{\infty} [F^*(v) - F(v)] dv =\int_{v=0}^{\infty} |F^*(v) - F(v)| dv
\end{eqnarray*}
and the constructed distribution $F'$ satisfies all the constraints in (\ref{constr}).
Therefore, with the constraint $F(v)\leq F'(v)$, we can neglect the monotonicity constraint of $F'(v)$.

Now we are ready to prove the part (2) of the theorem.
Let $v'$ be the smallest number such that $\forall v \geq v',G(v) = F(v)$.
If $v' \leq k$,
by setting $l = k$,
part (1) holds.
If $v' > k$, set $l = v'$.
Assuming for contradiction that that $G(l) > G(k)$, according to the construction of $v'$,
there exists a value $s \in (k, l)$ and a sufficiently small $\epsilon'$
such that
$G(v) > F(v)$, $G(v) > G(k)$  for any $v \in [s - \epsilon', s + \epsilon']$.
We define
$$\delta=\frac{1}{2} \min_{v \in [s - \epsilon', s + \epsilon']} \min\{G(v)-F(v), G(v)-G(k)\}.$$
We note that $\delta>0$. 
Construct a new ``distribution" $G'$ with
$G'(v) = G(v) + \delta$ for any $v \in [s - \epsilon', s]$,
$G'(v) = G(v) - \delta$ for any $v \in [s, s + \epsilon']$.
We note that $G'$ may not be a real distribution since it may not be monotone. But we have
gotten rid of the monotonicity constraint.  
It is easy to verify that $G'$ satisfies the required constraint as our choice of $\delta$.
Now we define
\begin{eqnarray*}
&& D(\delta) \triangleq Rev(\cM, G) - Rev(\cM, G')\\
&=& \int_0^{\infty}
[(m-1)\cdot (G^m(v) - G'^m(v)) 
- m\cdot (G^{m-1}(v) - G'^{m-1}(v)) ] dv\\
&=& \int_{s-\epsilon'}^{s}
[(m-1)\cdot (G^m(v) - (G(v) + \delta)^{m}) 
- m\cdot (G^{m-1}(v) - (G(v) + \delta)^{m-1}) ] dv\\
&& + \int_{s}^{s+\epsilon'}
[(m-1)\cdot (G^m(v) - (G(v) - \delta)^{m}) 
- m\cdot (G^{m-1}(v) - (G(v) - \delta)^{m-1}) ] dv.
\end{eqnarray*}
Obviously, $D(0) = 0$. 
Moreover, letting $\epsilon^*$ be a constant such that $0 < \epsilon^* < \epsilon'$, when $\delta=0$, 
we have
\begin{eqnarray*}
\frac{d D(\delta)}{d\delta} \Big|_{\delta = 0}&=&
  \int_{s-\epsilon'}^{s}
  m(m-1)\cdot (G^{m-2}(v) - G^{m-1}(v)) \ dv
  +\int_{s}^{s+\epsilon'}
  m(m-1) \cdot
  (G^{m-1}(v) - G^{m-2}(v)) \ dv \\
&\geq& m(m-1) \Big[(\epsilon' - \epsilon^*)
  (G^{m-2}(s-\epsilon^*) - G^{m-1}(s-\epsilon^*)) 
  + \epsilon^*
  (G^{m-2}(s) - G^{m-1}(s)) \Big]\\
&&+ m(m-1)
  \Big[ \epsilon^*
  (G^{m-1}(s) - G^{m-2}(s)) 
  + (\epsilon' - \epsilon^*)
  (G^{m-1}(s+\epsilon^*) - G^{m-2}(s+\epsilon^*)) \Big]\\
&=& m(m-1) (\epsilon' - \epsilon^*) 
\cdot \int_{G_{s-\epsilon^*}}^{G_{s+\epsilon^*}}[(m-1)x^{m-2}-(m-2)x^{m-3}] \ dx > 0.
\end{eqnarray*}
The above inequality holds because $x^{m-2} - x^{m-1}$ 
is monotone decreasing when $x \geq \frac{m-2}{m-1}$, 
and $G(v) \geq \frac{m-2}{m-1}$ when $v \geq s - \epsilon' > k$.
Therefore, there exists a sufficiently small $\delta$ such that
$D(\delta) > 0$,
which means that the revenue of $G'$ is smaller, a contradiction.
Hence, part (2) of the theorem is correct.

The proof of part (3) is similar to the proof of part (2).
Suppose otherwise,
there exists $s$, $\iota$ and $\epsilon'$ such that for any $v\in [s-\epsilon', s+\epsilon']$, $G(v)<\frac{m-2}{m-1}-\iota$ and $G(v) > F(v) + \iota$.
Consider anther distribution $G'$
with $G'(v) = G(v)-\iota$ for any $v\in [s-\hat{\epsilon}, s]$,
$G'(v) = G(v) +\iota$ for any $v\in [s,s+\hat{\epsilon}]$.
Applying the same approach before, we can verify that
for sufficiently small $\iota$,
the revenue of $G'$ is smaller because
the second derivative of $1 + (m-1)x^m-m \cdot x^{m-1} $ is negative when $x<\frac{m-2}{m-1}$, which is a contradiction.
The detailed proof is omitted here.
Therefore $\forall j<k$, $G(j) = F(j)$.
By combining all the proofs together, Theorem \ref{thm_sec} holds.
\end{proof}

\subsection{Second Price With Fixed Reserve}
As shown above, we have characterized the distribution with minimum revenue in the second price mechanism.
By extending the result to second price mechanism with a fixed reserve,
we then show how to use this characterization as a tool to design a FPTAS algorithm for finding the optimal robust reserve.
Here we assume the value distribution has bounded support.

\begin{theorem}\label{thm_reserve}
For any constant $\epsilon, \epsilon'$, if the buyers' value distribution is Lipschitz continuous with support in $[0, H]$,
and $\bM$ is the set of second price mechanisms with reserves,
there exists a polynomial time ($poly(m, \frac{1}{\epsilon'}, H)$) algorithm for finding mechanism $\cM^*$ such that
\begin{equation*}
\min_{f'\in EMD(f, \epsilon)} Rev(\cM^*, f')
\geq \max_{\cM \in \bM}\min_{f'\in EMD(f, \epsilon)} Rev(\cM, f') - \epsilon',
\end{equation*}
where $\bM$ is the set of second price mechanisms with fixed reserves.
\end{theorem}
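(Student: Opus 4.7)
The plan is to combine an outer grid search over reserve prices with an efficient inner subroutine that, for each candidate reserve $r$, computes (approximately) the worst-case revenue $\min_{f'\in EMD(f,\epsilon)} Rev(\cM, f')$ where $\cM$ is the second-price mechanism with reserve $r$. Since the support is contained in $[0,H]$ and $f$ is Lipschitz continuous, I discretize the set of reserves on a uniform grid of spacing $\delta$ chosen polynomially in $\epsilon'$, $1/m$, $1/H$, and the Lipschitz constant, yielding $O(H/\delta)$ candidates. For each candidate I compute the inner minimum in polynomial time and return the reserve with the highest worst-case revenue.

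The structural tool that makes the inner subroutine efficient is an analogue of Theorem \ref{thm_sec} for second-price-with-reserve. Writing the revenue as $r(1-F^m(r)) + \int_r^H \psi(F(v))\,dv$ with $\psi(x) = 1 + (m-1)x^m - mx^{m-1}$ (the same function as in the proof of Theorem \ref{thm_sec}), I claim that the worst-case distribution $G$ against reserve $r$ has a single flat region: there exist $k \leq l$ with $G(v) = F(v)$ for $v \notin [k,l]$, $G$ constant on $[k,l]$, and $EMD(G,F) = \epsilon$. The argument mirrors that of Theorem \ref{thm_sec}: the same two-sided mass-swap perturbation strictly decreases revenue wherever $\psi$ is concave as a function of $F(v)$, i.e.\ wherever $F(v) \leq (m-2)/(m-1)$, and strictly increases revenue where $\psi$ is convex, forcing any non-trivial non-flat region to collapse. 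The extra boundary term $r(1-F^m(r))$ is monotone in $F(r)$, so it merely biases the adversary toward concentrating mass below $r$ but does not create additional flat regions. With this characterization in hand, the adversary's choice reduces to the two real parameters $(k,l)$ subject to $\int_k^l(G(k)-F(v))\,dv = \epsilon$, and a second grid search of size $O((H/\delta)^2)$ finds the worst case to accuracy $\delta$.

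Correctness then follows from the Lipschitz continuity of $r \mapsto \min_{f'\in EMD(f,\epsilon)} Rev(\cM, f')$: changing $r$ by $\Delta r$ shifts $r(1-G^m(r))$ by at most $(1 + mH L)\Delta r$, where $L$ bounds the density of $G$ on the portions where $G = F$, and shifts the truncated integral by at most $O(1)\cdot \Delta r$. Choosing $\delta$ small enough ensures the outer discretization incurs at most $\epsilon'/2$ loss while the inner $(k,l)$ grid incurs at most $\epsilon'/2$ more, yielding the claimed $\epsilon'$ approximation in $\mathrm{poly}(m, 1/\epsilon', H)$ time. The main obstacle I anticipate is cleanly establishing the single-flat-region characterization in the presence of the reserve: one must rule out the possibility that the worst case opens two disjoint flat intervals, one on each side of $r$. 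I expect to resolve this by a merging argument showing that any two disjoint flat intervals can be consolidated into one without increasing the adversary's objective, by directly comparing their contributions to the two terms of the revenue decomposition and exploiting the convexity of $\psi$ on $[(m-2)/(m-1),1]$.
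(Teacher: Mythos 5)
Your outer structure (grid over reserves, then an inner subroutine that exploits a structural characterization of the worst-case distribution and a second grid over the interval endpoints) is the same skeleton as the paper's proof; the reserve-discretization step is fine and the paper handles it even more simply (with reserve $r'\ge r^*-\epsilon_1$ every realized payment drops by at most $\epsilon_1$, no Lipschitz argument needed). The gap is in your key structural claim for the inner problem: it is \emph{not} true that the worst-case distribution against a reserve $r$ has a single flat interval $[k,l]$ with $G=F$ outside it. The boundary term does more than ``bias the adversary toward concentrating mass below $r$'': raising $G(r)$ by an amount $\delta q$ (piling mass at the reserve, i.e.\ creating a flat piece of the CDF at height $G(r)>F(r)$ just above $r$) reduces the term $r\bigl(1-G^m(r)\bigr)$ by roughly $r\,m\,F(r)^{m-1}\delta q$, a \emph{first-order} gain, while its earth-mover cost is only $\approx \delta q^2/(2f(r))$, second order. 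So whenever $r>0$ and $F(r)>0$ the adversary always places some atom at $r$, and in intermediate regimes (e.g.\ $F(r)$ well below $\tfrac{m-2}{m-1}$, $m\ge 3$, $r$ and $\epsilon$ moderate) the budget is genuinely split between an atom at $r$ and a separate flat region straddling quantile $\tfrac{m-2}{m-1}$. Your proposed merging argument cannot consolidate these two intervals without loss, because they exploit different terms of the revenue: the interval anchored at $r$ earns the boundary term, which an interval away from $r$ cannot, while the high interval earns the larger $\psi$-slope near $\tfrac{m-2}{m-1}$; moving budget from either to the other strictly helps the seller. This is exactly why the paper's characterization has the extra degree of freedom $G(r)$ (with $G(v)=\max\{F(v),G(r)\}$ on $[r,k)$ and the flat piece $[k,l]$ above), and why its search is over $(k,l)$ with $G(r)$ then pinned down by $EMD(G,F)=\epsilon$.

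The consequence is not merely cosmetic: your inner minimization over single-flat-interval distributions computes an \emph{overestimate} of $\min_{f'\in EMD(f,\epsilon)}Rev(\cM,f')$, and the overestimation error is an instance-dependent constant (it does not shrink with your grid parameters), so the reserve you select can have true worst-case revenue falling short of $\max_{\cM\in\bM}\min_{f'}Rev(\cM,f')-\epsilon'$ once $\epsilon'$ is small. To repair the argument you should adopt the three-parameter family $(G(r),k,l)$ with the budget constraint eliminating one parameter, as in the paper, and then your two-dimensional grid search and error accounting go through essentially as you wrote them.
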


Note that we do not have any close form formula for the minimum revenue within $\epsilon$ earth mover's distance.
In order to find the approximately optimal reserve,
we need to limit the possible choice of reserves. 
Since the buyers' values locate in support $[0,H]$,
we consider the reserve being the multiply of $\epsilon_1$.
That is, we only consider the reserve in the set $R = \{i \cdot \epsilon_1\}_{i\in [\frac{H}{\epsilon_1}]}$.
We prove that this is sufficient to compute the approximately optimal reserve.
\begin{lemma}\label{lem_discrete}
There exists a second price mechanism $\cM'$ with reserve $r' \in R$ such that
\begin{equation*}
\min_{f'\in EMD(f, \epsilon)} Rev(\cM', f')
\geq \max_{\cM \in \bM}\min_{f'\in EMD(f, \epsilon)} Rev(\cM, f') - \epsilon_1.
\end{equation*}
\end{lemma}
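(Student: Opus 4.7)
The plan is to pick $r'\in R$ as the nearest multiple of $\epsilon_1$ below the optimal reserve and show that this rounding costs at most $\epsilon_1$ in revenue pointwise, hence also in the min over adversarial distributions. Concretely, let $r^*\in[0,H]$ be a reserve attaining $\max_{\cM\in\bM}\min_{f'\in EMD(f,\epsilon)}Rev(\cM,f')$ (any reserve above $H$ never sells, so we may restrict to $[0,H]$). Set $r'=\lfloor r^*/\epsilon_1\rfloor\cdot\epsilon_1\in R$, so $0\leq r^*-r'\leq\epsilon_1$. Let $\cM^*,\cM'$ denote the second price auctions with reserves $r^*$ and $r'$ respectively. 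The target is the uniform pointwise bound
\[
Rev(\cM',f')\ \geq\ Rev(\cM^*,f')-\epsilon_1\qquad\forall f'\in EMD(f,\epsilon),
\]
which immediately yields the lemma by taking $\min_{f'}$ on both sides.

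The heart of the argument is a case analysis on a realized value profile $(v_1,\dots,v_m)$ with top two order statistics $v_{(1)}\geq v_{(2)}$. Writing $\rho_r(v)$ for the payment collected by second price with reserve $r$ on profile $v$, I distinguish: (i) $v_{(1)}<r'$, where both auctions collect $0$; (ii) $r'\leq v_{(1)}<r^*$, where $\rho_{r^*}=0$ while $\rho_{r'}=\max(r',v_{(2)})\geq 0$, so $\cM'$ does at least as well; (iii) $v_{(1)}\geq r^*$ and $v_{(2)}<r'$, giving $\rho_{r'}=r'$ and $\rho_{r^*}=r^*$, so the gap is exactly $r^*-r'\leq\epsilon_1$; (iv) $v_{(1)}\geq r^*$ and $r'\leq v_{(2)}<r^*$, where $\rho_{r'}=v_{(2)}$ and $\rho_{r^*}=r^*$, so the gap is $r^*-v_{(2)}\leq r^*-r'\leq\epsilon_1$; (v) $v_{(2)}\geq r^*$, where both equal $v_{(2)}$. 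In every case, $\rho_{r^*}(v)-\rho_{r'}(v)\leq\epsilon_1$.

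Because the pointwise inequality holds for every value profile, taking expectation over the i.i.d.\ product measure $(f')^m$ preserves it, yielding $Rev(\cM^*,f')-Rev(\cM',f')\leq\epsilon_1$ for every $f'\in EMD(f,\epsilon)$. Minimizing over $f'\in EMD(f,\epsilon)$ on both sides then gives
\[
\min_{f'\in EMD(f,\epsilon)}Rev(\cM',f')\ \geq\ \min_{f'\in EMD(f,\epsilon)}Rev(\cM^*,f')-\epsilon_1\ =\ \max_{\cM\in\bM}\min_{f'\in EMD(f,\epsilon)}Rev(\cM,f')-\epsilon_1,
\]
as required. There is no real obstacle here since the revenue of a second price auction is $1$-Lipschitz in the reserve on every value profile; the only fine point is a minor technical assumption that the outer maximum over reserves is attained (otherwise use a reserve $r^*$ that is $\epsilon_1$-suboptimal and the same argument gives a $2\epsilon_1$ bound, which can be absorbed by rescaling $\epsilon_1$), and the boundary case $r^*=0$ for which $r'=0$ and the inequality is trivial.
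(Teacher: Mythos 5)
Your proof is correct and follows essentially the same route as the paper: round the optimal reserve down to the nearest point of $R$, observe that the payment on every valuation profile drops by at most $\epsilon_1$, and pass this pointwise bound through the expectation and the minimum over $f'\in EMD(f,\epsilon)$. The only difference is that you spell out the case analysis on the top two order statistics, which the paper states in a single sentence.
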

\begin{proof}
\sloppy
Let the optimal mechanism be $\cM^* = \arg\max_{\cM \in \bM}\min_{f'\in EMD(f, \epsilon)} Rev(\cM, f')$.
Suppose the mechanism $\cM^*$ has reserve $r^*$.
By simply setting $r' = \max\{r\in R | r \leq r^*\}$,
we only need to prove that for any distribution $f' \in EMD(f, \epsilon)$,
\begin{equation}\label{eq_eps_approx}
Rev(\cM', f') \geq Rev(\cM^*, f') - \epsilon_1.
\end{equation}
In fact, by construction of $R$, $r' \geq r^* - \epsilon_1$.
Therefore, for any valuation profile, the payment in $\cM'$ is at least the payment in $\cM^*$ minus $\epsilon_1$.
Hence, for any distribution $f'$, the expected payment satisfies Equation \ref{eq_eps_approx}, and Lemma \ref{lem_discrete} holds.
\end{proof}

\begin{proof}[Proof of Theorem \ref{thm_reserve}]
By Lemma \ref{lem_discrete}, we can focus on computing the minimum revenue for a fixed reserve price $r$.
To begin with, we explicitly express the revenue of a distribution in terms of its cumulative probability function.
For simplicity, we assume without loss of generality that a buyer can only get the item if he bids strictly larger than the reserve.
Then, we can rewrite the expected revenue of the second price with reserve is
\begin{align*}
&Rev(\cM, f')
= r \cdot \Pr(1^{st} \text{ price} > r) + \int_{r}^{\infty} \Pr(2^{nd} \text{ price} > v)dv\\
=& r(1 - F'^m(r)) 
 + \int_{r}^{\infty} (1 + (m-1)F'^m (v) - m \cdot F'^{m-1}(v)) \ dv.
\end{align*}

First, let $g \triangleq \arg\min_{f'\in EMD(f, \epsilon)}Rev(\cM, f')$ denote the worst distribution for the second price mechanism with reserve $\cM$,
and $G$ is its corresponding cumulative distribution.
Note that it is meaningless to move the distribution below the reserve $r$.
Therefore, for any $v \leq r$, $G(v) = F(v)$.
Moreover, similar to Theorem \ref{thm_sec}, we have $EMD(G, F) = \epsilon$.
Assuming that $k$ is the smallest number such that $G(k) \geq \max\{G(r), \frac{m-2}{m-1}\}$,
we have the following characterization.

\begin{enumerate}
\item There exists $l \geq k$ such that $G(v)= G(k)$ for any $k \leq v \leq l$, and $G(v)= F(v)$ for any $v > l$.
\item For any $r \leq v < k, G_v = \max\{F(v),G(r)\}$.
\end{enumerate}

The proof of the first property is identical to the Theorem \ref{thm_sec}.
For the second property, as stated in Theorem \ref{thm_sec}, in order to have the distribution with minimum revenue,
for any $v$, we always have $G(v) \geq F(v)$.
By monotonicity of the cumulative probability function, we have $G(v) \geq G(r)$ for any $v\geq r$.
Therefore, For any $r \leq v < k, G(v) \geq \max\{F(v),G(r)\}$.
Assuming that the equality does not hold,
there exists $s$ and $\epsilon'$ such that for any $v\in [s-\epsilon', s+\epsilon']$, $G(v) > \max\{F(v),G(r)\}$.
Similar to the proof of Theorem \ref{thm_sec},
we can construct another valid distribution $G'$ with smaller revenue,
where $G'(v) = G(v)-\delta$ for any $v\in [s, s+\epsilon']$,
$G'(v) = G(v) + \delta$ for any $v\in [s - \epsilon', s]$,
and $\delta$ is also a sufficiently small constant.

With these characterization, we know that
it is sufficient to compute the value $G_r, k$ and $l$ to determine the distribution $G$.
Note that those three variable satisfies the constraint that $EMD(G, F) = \epsilon$.
Therefore, it is sufficient for us to determine $k$ and $l$.

First, we discretize the cumulative probability space $[0,1]$ into
$Q = \big\{\big[\frac{\epsilon_2\cdot i}{m^2 H}, \frac{\epsilon_2\cdot (i+1)}{m^2 H}
\big]\big\}_{i\in[\frac{m^2 H}{\epsilon_2}]}$.
We show that for any mechanism $\cM$,
there exists a distribution $G'$ with $k'$, $l'$ such that
$k' = \max\{i | F(i) \in Q, i \geq k\}$,
$l' = \max\{i | F(i) \in Q, i \leq l\}$, and
\begin{equation*}
Rev(\cM, G')
\leq \min_{f'\in EMD(f, \epsilon)} Rev(\cM, f') + O(\epsilon_2).
\end{equation*}

If the above statement is true, then by brute force searching all possible combinations of $k,l$,
we can approximately estimate the minimum revenue for mechanism $\cM$.
Note that by our construction, $G(v) - \frac{2\epsilon_2}{m^2 H} \leq G'(v) \leq G(v)$ for any $v \in [k, l]$,
$G'(r) \geq G(r)$, and
\begin{eqnarray*}
&& Rev(\cM, G') - Rev(\cM, G) \\
&\leq&
\int_{k}^{l} ((m-1)G'^m(v) - m \cdot G'^{m-1}(v) 
- ((m-1)G^m(v) - m \cdot G^{m-1}(v))) dv \\
&\leq& (l - k) ((m-1)G'^m(k) - m \cdot G'^{m-1}(k) 
- ((m-1)G^m(k) - m \cdot G^{m-1}(k))) \\
&\leq& mH (G^{m-1}(k) - G'^{m-1}(k)) \leq O(\epsilon_2).
\end{eqnarray*}
By carefully choosing $\epsilon_1$ and $\epsilon_2$,
Theorem \ref{thm_reserve} holds.
\end{proof}

\section{Conclusion and Future Direction}
We characterized the optimal robust mechanism for single-buyer case 
and the optimal robust reserve price for the second price auction for the multi-buyer case. 
However, many interesting questions are left open.
For starters, it would be exciting to bound the gap between the worst case performance of the second price mechanism with a fixed reserve and the optimal robust mechanism.
Moreover, we do not have any characterization of the optimal robust mechanism 
for this max-min goal when there are more than one buyer.
For this problem, we conjecture that when the known distribution is regular,
second price mechanism with randomized reserves is optimal.
Another interesting direction is to investigate this problem in the multi-parameter setting.
The revenue maximization problem for multi-parameter setting is notoriously hard.
So, a good starting point would be analyzing the performance of the simple constant approximation mechanisms for this case.


\bibliographystyle{apalike}  
\bibliography{sample-bibliography}  

\end{document}